\newtheorem{theorem}{Theorem}
\newtheorem{definition}{Definition}
\newtheorem{example}{Example}
\def\BibTeX{{\rm B\kern-.05em{\sc i\kern-.025em b}\kern-.08em
    T\kern-.1667em\lower.7ex\hbox{E}\kern-.125emX}}
\begin{document}

\title{DiskANN++: Efficient Page-based Search over Isomorphic Mapped Graph Index using Query-sensitivity Entry Vertex}

\author{
		\IEEEauthorblockN{Jiongkang Ni\textsuperscript{1}, Xiaoliang Xu\textsuperscript{1}, Yuxiang Wang\textsuperscript{1}, Can Li\textsuperscript{1}, Jiajie Yao\textsuperscript{2}, Shihai Xiao\textsuperscript{2}, Xuecang Zhang\textsuperscript{2}}
		\IEEEauthorblockA{\textsuperscript{1} \textit{Hangzhou Dianzi University, Hangzhou, China}, 
			\textsuperscript{2} \textit{Huawei Technologies Co., Ltd, Hangzhou, China}}
		\{hananoyuuki,xxl,lsswyx,lic\}@hdu.edu.cn, \{yaojiajie1,xiaoshihai,zhangxuecang\}@huawei.com
}

\maketitle

\begin{abstract}
Given a vector dataset $\mathcal{X}$ and a query vector $\vec{x}_q$, graph-based Approximate Nearest Neighbor Search (ANNS) aims to build a graph index $G$ and approximately return vectors with minimum distances to $\vec{x}_q$ by searching over $G$. The main drawback of graph-based ANNS is that a graph index would be too large to fit into the memory especially for large-scale $\mathcal{X}$. To solve this, a Product Quantization (PQ)-based hybrid method called DiskANN is proposed to store a low-dimensional PQ index in memory and retain a graph index in SSD, to reduce memory overhead while ensuring a high search accuracy. However, it suffers from two I/O issues that significantly affect the overall efficiency: (1) long routing path from the entry vertex to the query's neighborhood and (2) redundant I/O requests during the routing process. We propose an optimized DiskANN++ to overcome above issues. Specifically, for the first issue, we present a \textit{query-sensitive entry vertex selection strategy} to replace DiskANN's static graph-central entry vertex by a dynamically determined entry vertex that is close to the query. For the second I/O issue, we present an \textit{isomorphic mapping on DiskANN's graph index to optimize the SSD layout} and propose an \textit{asynchronously optimized Pagesearch based on the optimized SSD layout} as an alternative to DiskANN's Beamsearch. Comprehensive experimental studies on real-world datasets demonstrate our DiskANN++'s superiority on efficiency, e.g., we achieve a notable 1.5 X to 2.2 X improvement on QPS compared to DiskANN, given the same accuracy constraint.
\end{abstract}

\section{Introduction}
\label{sec:intro}

Approximate Nearest Neighbor Search (ANNS) \cite{arya1993approximate, indyk1998approximate} has been widely studied recently, which is fundamental for many real-world applications, such as recommendation systems \cite{Wang2022, Sarwar2001}, information retrieval \cite{Xu2022, Flickner1995}, data mining \cite{Adeniyi2016, Bijalwan2014}, and pattern recognition \cite{Cover1967, Zhu2019}. Given a vector dataset $\mathcal{X}$ and a query vector $\vec{x}_q\in \mathcal{X}$, ANNS aims to return approximate nearest neighbors to $\vec{x}_q$ from $\mathcal{X}$ with minimum distance \cite{Wang2021}. ANNS is generally categorized as four types: \textit{hashing-based} \cite{Gionis1999, Gong2020}, \textit{tree-based} \cite{Arora2018, SilpaAnan2008}, \textit{quantization-based} \cite{Jegou2010, Chiu2019}, and \textit{graph-based} \cite{Fu2017, Malkov2016} methods. Among them, graph-based ANNS attracts more attention and has been regarded as the most promising one due to its impressive efficiency and effectiveness over large-scale datasets \cite{Chen2021, Fu2017, Aumueller2020, Wang2021, Shimomura2021, Aoyama2013, Hacid2010, Paredes2005}. 

Graph-based ANNS uses a graph index (e.g., NSG \cite{Fu2017}, HNSW \cite{Malkov2016}, etc.) maintained in the memory to facilitate approximate nearest neighbors retrieval. Each vertex in a graph index represents a vector in $\mathcal{X}$, and an edge between two vertices defines a neighbor relationship. During the retrieval phase, a graph routing over the graph index is conducted, starting from an entry vertex and iteratively exploring the graph index towards the query $\vec{x}_q$ until it converges \cite{Malkov2016}. \textit{The main limitation comes from the heavyweight memory overhead of a graph index.} The memory overhead significantly increases as $\mathcal{X}$ increases, e.g., using HNSW for a billion-scale $\mathcal{X}$ in 128 dimensions would consume around 800 GB memory, which largely exceeds the RAM capacity on a workstation. 


\vspace{0.1cm}
\noindent\textbf{Existing solutions.} Many efforts have been made to alleviate the memory issue of graph-based ANNS for large-scale datasets \cite{Chen2021,Simhadri2022,JayaramSubramanya2019}. The basic idea is to introduce external drives (e.g., SSD) to store high-dimensional vectors, thereby reducing memory overhead. \textit{SPANN} \cite{Chen2021} clusters on the dataset and uses the centroids to construct a spatial partitioning tree (SPT) that resides in memory. High-dimensional vectors within each cluster are stored on SSD. \textit{BBANN} \cite{Simhadri2022} conducts balanced clustering on the dataset and uses the clustered centroids to build a graph index maintained in memory. High-dimensional vectors within each cluster are stored on SSD. The search accuracy of the above methods is significantly affected by the coarse-grained SPT or graph index built on clustered centroids, resulting in a noticeable accuracy loss \cite{Dobson2023}.

\begin{figure*}
	\vspace{-0.5cm}
	\setlength{\abovecaptionskip}{0cm}
    \centering
    \includegraphics[scale=0.53]{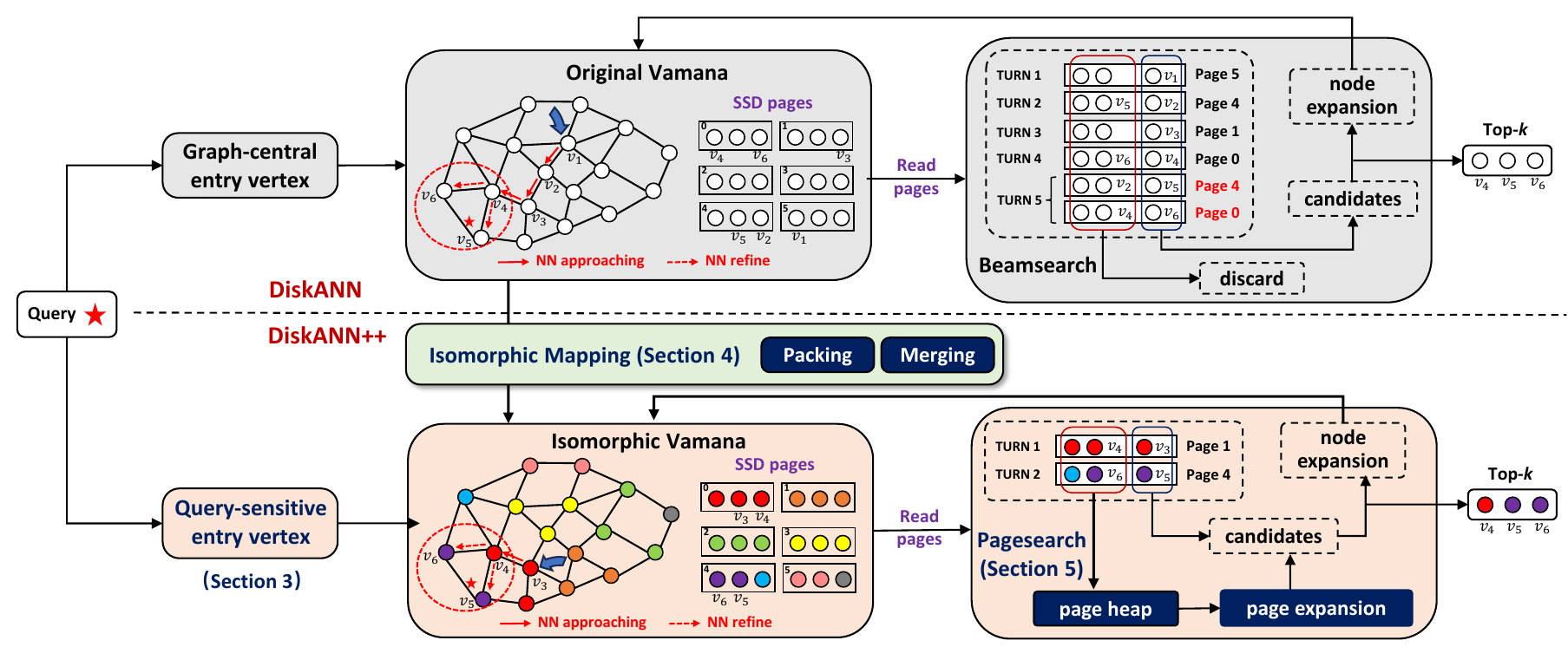}
    \caption{Search examples on DiskANN (top) and DiskANN++ (bottom), given the same query vertex and the Vamana graph index with the same topological structure but different SSD layout.}
    \label{fig:DiskANNPP_overview}
    \vspace{-0.5cm}
\end{figure*}

\vspace{0.1cm}
\noindent\underline{DiskANN \cite{JayaramSubramanya2019}.} This inspires DiskANN, a Product Quantization (PQ)-based hybrid (memory+SSD) method that aims to reduce memory overhead while ensuring a high search accuracy. It first conducts PQ for original high-dimensional vectors to obtain the quantized low-dimensional vectors, then stores a PQ index for quantized vectors in memory and a graph index for original vectors in SSD. PQ index serves for lossy distance calculations, based on which initial candidate neighbors are provided. Graph index serves for re-ranking candidate neighbors based on original vectors, which is a post-verification of lossy distances, so as to enhance the search accuracy. We refer readers to \cite{JayaramSubramanya2019} and \S \ref{sec:pre} for more details. DiskANN has been widely deployed in the industry such as Bing search of Microsoft \cite{Zhang2022a} and many follow-up works present variants of DiskANN to support their own scenarios, e.g., Filter-DiskANN for filtered search \cite{gollapudi2023}, OOD-DiskANN for cross-modal search \cite{Jaiswal2022}, and Fresh-DiskANN for streaming search \cite{Singh2021}. Although DiskANN and its variants achieve a good performance on both memory overhead and search accuracy, they all ignore the important I/O issue (frequent SSD I/O requests for candidates re-ranking) that significantly affect the queries per second (QPS). In general, the latency of accessing SSD is 10X+ greater than that of accessing memory, the more the SSD I/O requests, the more the time for DiskANN. We next discuss the I/O issue of DiskANN.

\vspace{0.1cm}
\noindent\underline{I/O issue of DiskANN.} Figure \ref{fig:DiskANNPP_overview} (top) illustrates the search process of DiskANN on the Vamana graph index. Vertices in Vamana are assigned to SSD pages in a round-robin, e.g., 6 pages for 18 vertices in this example. Given a query vertex (the red star), DiskANN adopts \textit{beamsearch} (discussed in \S \ref{sec:pre}) to retrieve the top-$k$ results. It starts from the static entry vertex (i.e., the central vertex of a graph index) to explore a routing path towards query's neighborhood. We call this the \textit{nearest neighbor approaching} (NN approaching) phase \cite{Xu2021}. DiskANN first reads SSD to get the Page-5 involving the entry vertex $v_1$, then takes $v_1$ as candidate for node expansion and discard the other nodes. In the step of \textit{node expansion}, DiskANN computes the distances between candidates' neighbors to query and requests SSD Page-4 to obtain the closest vertex $v_2$. It repeats until the search locates in the neighborhood of query (indicated by a red dashed circle). NN approaching stops when $v_4$ is explored. Next, DiskANN goes into the \textit{nearest neighbor refine} (NN refine) phase to find the top-$k$ nearest neighbors to query via a nearly brute-force vertex traversal in the query's neighborhood, and both Page-0 and Page-4 are requested to obtain $v_4$'s two neighbors $v_5$, $v_6$. 

The I/O issue of DiskANN is two-fold: (1) \textit{Long routing path in NN approaching phase} and (2) \textit{Redundant I/O requests in NN refine phase}. For (1), since DiskANN takes the graph-central vertex as a static entry vertex for all queries, it would result in a long routing path in NN approaching phase, when the query is far away from the entry vertex. In above example, it needs 3 hops from $v_1$ to query's neighborhood, yielding 4 SSD I/O requests out of the total 6 requests. In practice, queries often arrive randomly, which leads to a large number of long routing paths, so that affecting the overall efficiency. For (2), since DiskANN adopts a random SSD layout, it would result in redundant I/O requests in NN refine phase, when vertices on the same SSD page have less closeness in the graph index. In this example, each accessed page only involves one useful vertex for beamsearch, reducing the data value of each I/O request and leading two redundant I/O requests of Page-0 and Page-4. Actually, they have been accessed in NN approaching phase, however, DiskANN does not know they would be required later and directly discard them as instead.


The aforementioned I/O issue inspires our study in this paper to shorten the routing path and improve the effectiveness of each I/O request (i.e., making an I/O request carry more useful vertices), so that significantly reduces the total number of I/O requests and increases the overall QPS of DiskANN.

\vspace{0.1cm}
\noindent\textbf{Our solution.} We next briefly introduce our solution below.

\noindent\underline{(1) Query-sensitive entry vertex selection (\textbf{\S \ref{sec:entry_vertex}}).} For the long routing path problem, we present a query-sensitive entry vertex selection strategy to dynamically determine the entry vertex in run-time instead of the original static graph-central entry vertex (\textbf{\S \ref{sec:entry_vertex_overview}}). Given a vector dataset $\mathcal{X}$ and a Vamana graph built for $\mathcal{X}$, we first cluster the dataset to acquire $N_{\rm cluster}$ centroids, then we take each centroid as a query to perform ANNS on Vamana to find its top-1 nearest vertex, and finally we add all the $N_{\rm cluster}$ nearest vertices and the graph-central vertex into an \textit{entry vertex candidate list}. For each incoming query $\vec{x}_q$, we linearly scan the candidate list and employ the nearest vertex to $\vec{x}_q$ as the entry vertex. In Figure \ref{fig:DiskANNPP_overview} (bottom), we may select $v_3$ as the entry vertex, leading a 1-hop routing path to query's neighborhood. In \textbf{\S \ref{sec:analysis_entry_vertex}}, we leverage the monotonicity of MSNET \cite{Dearholt1988,Zhu2021} to theoretically prove that using query-sensitive entry vertex would result in a shorter routing path than that of graph-central entry vertex. The shorter routing path usually indicates the less SSD I/O requests.

For redundant I/O issue, we present: \textit{isomorphic mapping of graph index} and \textit{page-level optimization to beamsearch}.

\vspace{0.1cm}
\noindent\underline{(2) Isomorphic mapping of graph index (\textbf{\S \ref{sec:isomorphic}}).} Here, we propose an \textit{isomorphic mapping on Vamana} to optimize the SSD layout. We first apply an injective mapping via \textit{star packing} \cite{Babenko2011} on the original Vamana. It effectively assigns the vertices with great closeness into the same SSD page, so that increasing the data value of each individual I/O request. Then, we apply \textit{bin packing} based on \textit{First Fit Decreasing} (FFD) to make the injective mapping also surjective, i.e., converting the injection to bijection (or isomorphic mapping), thereby ensuring that the original graph's topology and addressing mode are preserved in the new SSD layout. In Figure \ref{fig:DiskANNPP_overview} (bottom), we show the refined SSD layout with colored vertices. Note that, vertices that are close to each other are likely to be assigned to the same SSD page. Suppose we start search from $v_3$ and $v_4$ is the next-hop vertex of routing. Since they are retained in the same SSD page, we only need one SSD I/O request for Page-1 to access them simultaneously. Similarly, we only require one SSD I/O request for Page-4 to access $v_5$ and $v_6$ simultaneously.

\vspace{0.1cm}
\noindent\underline{(3) Page-level optimizations to beamsearch (\textbf{\S \ref{sec:page_based_search}}).} On the basis of the refined SSD layout above, we present a new search algorithm with page-level optimizations called \textit{Pagesearch} as an alternative to beamsearch. The basic idea is to harness the idle CPU resources during SSD I/O requests to further mitigate the search latency. Since we apply isomorphic mapping on Vamana, each page contains more valuable vertices that would be used in the future. So, we design a novel component called \textit{page heap} to asynchronously cache the valuable vertices from previously accessed pages. Next, we expand search from the cached valuable vertices via an asynchronous \textit{page expansion} using the idle CPU resources while waiting the results of SSD I/O requests. The vertices obtained by page expansions would be added into the global candidates for the node expansion of beamsearch. In a nutshell, the page expansion is a complement to node expansion by providing more useful candidates from the previously accessed pages. It is a concurrent operation with SSD I/O requests without introducing extra latency.

\begin{figure}
    \centering
    \includegraphics[width=0.95\linewidth]{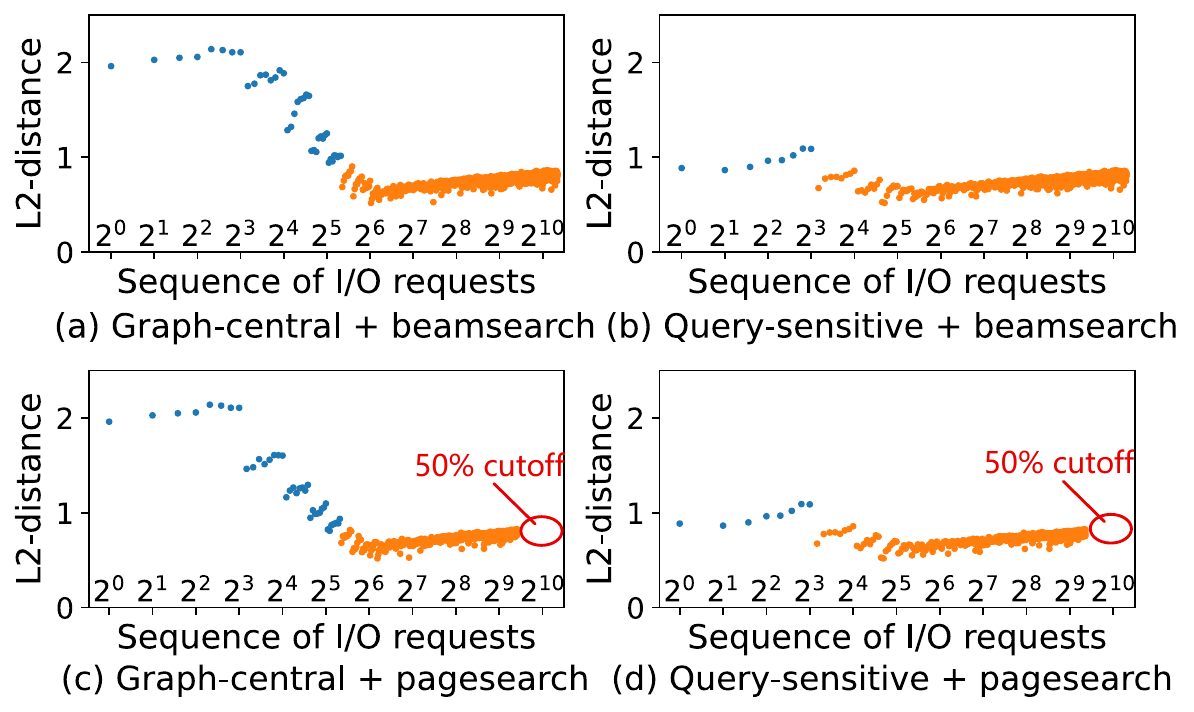}
    \vspace{-0.2cm}
    \caption{Performance comparison on deep100M.}
    \label{fig:two_stage}
    \vspace{-0.8cm}
\end{figure}

Figure \ref{fig:two_stage} shows an ablation analysis of DiskANN using beamsearch and pagesearch with static entry vertex and query-sensitive entry vertex (four combinations in total), given the same queries under the same recall@100 of 99\%. The X-axis is the sequence of I/O requests and Y-axis shows the average vector distance over all accessed vertices in an I/O request to the query. We use the blue (orange) points in plot to represent the I/O requests in NN approaching (NN refine) phase. As more I/O requests are performed, the search is getting closer to the query. Figure \ref{fig:two_stage} (a)-(b) show results using beamsearch with different entry vertex strategies, we found that using query-sensitive entry vertex would significantly reduce the SSD I/O requests in NN approaching phase from 32 to 8. Figure \ref{fig:two_stage} (a)-(c) show results using beamsearch and pagesearch with the same static entry vertex, we found that the SSD I/O requests in NN refine phase are reduced by at least 50\%. Figure \ref{fig:two_stage} (d) shows the results using pagesearch with query-sensitive entry vertex, i.e., our DiskANN++. Comparing with DiskANN (Figure \ref{fig:two_stage} (a)), ours achieves a better I/O efficiency in both two phases, leading to a 2 X improvement on QPS.

\vspace{0.1cm}
\noindent\textbf{Contributions.} Our contributions can be concluded as follows.
\begin{itemize}
    \item We propose a query-sensitive entry vertex selection strategy to determine entry vertex dynamically (\textbf{\S \ref{sec:entry_vertex_overview}}) and prove its effectiveness theoretically (\textbf{\S \ref{sec:analysis_entry_vertex}}).
    \item 
	We present an isomorphic mapping on Vamana to refine the SSD layout (\textbf{\S \ref{sec:pack}}) and analyze its effectiveness using a metric of page compactness (\textbf{\S \ref{sec:connectivity}}). 
    \item We design a novel search algorithm called Pagesearch with page-level optimizations based on the refined SSD layout (\textbf{\S \ref{sec:page_based_search}}), which utilizes idle CPU resources to update global candidates with more valuable vertices.
    \item Extensive experiments (\textbf{\S \ref{sec:evaluation}}) conducted on eight public and commercial datasets with different types and scales, show that our solution achieves a notable QPS improvement ranging from 1.5 X to 2.2 X.
\end{itemize}

\section{Preliminary}
\label{sec:pre}


\begin{definition}
\label{def:ANNS}
\textbf{ANNS \cite{Wang2021}.} Given a vector dataset $\mathcal{X}$, a query vector $\vec{x}_q$, and a parameter $\varepsilon>0$, the goal of ANNS is to find the top-$k$ vectors $\{\vec{x}_1,\cdots,\vec{x}_k\}$ from $\mathcal{X}$ that are approximate nearest neighbors to $\vec{x}_q$. We say a vector $\vec{x}_i\in \mathcal{X}$ is an approximate nearest neighbor to $\vec{x}_q$ if $\delta(\vec{x}_i,\vec{x}_q)\leq (1+\varepsilon)\cdot\delta(\vec{x}_i^*,\vec{x}_q)$, where $\vec{x}_i^*\in \mathcal{X}$ is the $i$-th nearest neighbor vector of $\vec{x}_q$ and $\varepsilon$ is an relaxation parameter controlling the top-$k$ results' quality.
\end{definition}

\begin{definition}
\label{def:graph_index}
\textbf{Graph index.} Given a vector dataset $\mathcal{X}$ and a non-negative distance threshold $\bar{\delta}$, the graph index of $\mathcal{X}$ w.r.t. $\bar{\delta}$ is a graph $G=(V,E)$ with the vertex set $V$ and edge set $E$. (1) There is a bijection $\phi:\mathcal{X}\rightarrow V$, $\forall \vec{x}_v\in \mathcal{X}$, $\exists v\in V$ satisfying $v=\phi(\vec{x}_v)$, i.e., each vertex $v\in V$ corresponds to a vector $\vec{x}_v\in \mathcal{X}$. (2) For any two vertices $v_i,v_j\in V$ ($i\neq j$), there exists an edge $e(v_i,v_j)\in E$ iff $\delta(\vec{x}_{v_i},\vec{x}_{v_j})<\bar{\delta}$.
\end{definition}

\begin{definition}
\label{def:recall}
\textbf{Recall@$k$.} Given a query vector $\vec{x}_q$, $\mathcal{R}^*$ records the ground-truth nearest neighbors with $k$ vectors from $\mathcal{X}$ and $\mathcal{R}$ records $k$ approximate nearest neighbors returned by ANNS. Then we define the Recall@$k$ as follows.
\begin{equation}
\label{eq:recall}
\text{Recall}\text{@}k = \frac{\left|\mathcal{R}^* \cap \mathcal{R}\right|}{|\mathcal{R}^*|} = \frac{\left|\mathcal{R}^* \cap \mathcal{R}\right|}{k}
\end{equation}
\end{definition}

\begin{definition}
\label{def:qps}
\textbf{Queries Per Second (QPS).} QPS is a metric indicating the number of queries that an ANNS method can handle per second. Suppose an ANNS method processes $N_q$ queries within$T$ seconds, then we have $\text{QPS}=N_q/T$.
\end{definition}


\vspace{0.1cm}
\noindent\textbf{Briefly introduction to DiskANN.} We briefly introduce the SSD layout of the graph index and beamsearch used in DiskANN, which is important for understanding our solution.

\vspace{0.1cm}
\noindent\underline{Original SSD layout.} DiskANN employs a straightforward method to store the graph index $G=(V,E)$ in SSD. 

\begin{definition}
\label{def:data_blocks}
\textbf{Data block.} Given a vertex $v\in V$, we define the data block of $v$ as $b_v=\langle\vec{x}_v,N(v)\rangle$, which is the basic unit for SSD storage. (1) $\vec{x}_v\in \mathcal{X}$ is the vector of $v$. (2) $N(v)$ recodes the identities of $v$'s neighbors in $G$. (3) We use $v$ to indicate the identity of the data block $b_v$, denoted by $b_v.{\sf ID}=v$.
\end{definition}

DiskANN stores the data blocks of all $|V|$ vertices to SSD pages in a round-robin with page alignment. We use $L=\{P_1,\cdots, P_n\}$ to denote the SSD layout with $n$ pages and each $P=\{b_{v_1},\cdots,b_{v_b}\}$ contains $b$ data blocks. For simplicity, in the rest of this paper, we use $L.{\sf IDs}=\{P_1.{\sf IDs},\cdots, P_n.{\sf IDs}\}$ to denote the logic view of a layout that only consists of the identities of all data blocks, where $P.{\sf IDs}=\{v_1,\cdots,v_b\}$.

\setlength{\textfloatsep}{0.1cm}
\begin{algorithm}[t]
\small
\setstretch{0.95}
\caption{Beamsearch($G$, $\vec{x}_q$, $B$, $L_s$, $k$)}
\label{alg:beamsearch}
\LinesNumbered
\KwIn{$G$, $\vec{x}_q$, $B$, $L_s$, $k$ of top-$k$}
\KwOut{approximate nearest top-$k$ neighbors $\mathcal{R}$ to $\vec{x}_q$}
\tcp{\footnotesize Initialization: lines 1-3}
$v_e \leftarrow$ central vertex of $G$ \tcc*[r]{{\scriptsize static entry vertex}}
$\mathcal{C} \leftarrow \{v_e\}$ \tcc*[r]{{\scriptsize candidates initialized as $\{v_e\}$}}
$\mathcal{R} \leftarrow \emptyset$ \tcc*[r]{{\scriptsize top-$k$ results initialized as $\emptyset$}}
\Do{$\mathcal{F}\neq\emptyset$}{
$\mathcal{F} \leftarrow$ top-$B$ unvisited vertices from $C$ for expansion \;
    \tcp{\footnotesize prepare SSD I/O requests: line 6-10}
    $\mathcal{P} \leftarrow \emptyset$ \tcc*[r]{{\scriptsize page placeholders}}
    \For{$v_i \in \mathcal{F}$}{
        $P_j \leftarrow$ register read for page containing $b_{v_i}$ \;
        $\mathcal{P} \leftarrow \mathcal{P} \cup P_j$ \;
    }
    \tcp{\footnotesize SSD I/O requests: line 11}
    read all required pages in $\mathcal{P}$ from SSD \;
    \tcp{\footnotesize Node expansion: lines 12-15}
    \For{$v_i \in \mathcal{F}$}{
        $b_{v_i}=\langle \vec{x}_{v_i},N(v_i)\rangle \leftarrow$ obtained from $\mathcal{P}$ \;
        NeighborExpansion($\vec{x}_{q}$, $b_{v_i},\mathcal{C},\mathcal{R},L_s,k$) \;
    }
}
\Return $\mathcal{R}$\;
\end{algorithm}

\setlength{\textfloatsep}{0.1cm}
\setlength{\floatsep}{0.1cm}
\begin{algorithm}[t]
\setstretch{0.95}
\small
\caption{NeighborExpansion($\vec{x}_{q}$, $b_v$, $\mathcal{C}$, $\mathcal{R}$, $L_s$, $k$)}
\label{alg:neighbor_expand}
\LinesNumbered
\SetNoFillComment
\KwIn{$\vec{x}_{q}$, $b_v=\langle \vec{x}_v,N(v) \rangle$, $\mathcal{C}$,$\mathcal{R}$, $L_s$, $k$ of top-$k$}
$\mathcal{C} \leftarrow \mathcal{C}\cup N(v)$ \tcc*[r]{{\scriptsize sort $\mathcal{C}$ by PQ distance to $\vec{x}_q$}}
\While{$|\mathcal{C}| > L_s$}{
    pop back from $\mathcal{C}$ \;
}
$\mathcal{R} \leftarrow \mathcal{R}\cup \{\vec{x}_v\}$ \tcc*[r]{{\scriptsize sort $\mathcal{R}$ by full distance to $\vec{x}_q$}}
\While{$|\mathcal{R}| > k$}{
    pop back from $\mathcal{R}$ \;
}
\end{algorithm}

\vspace{0.1cm}
\noindent\underline{Beamsearch.} Beamsearch is the core search algorithm of DiskANN, but the details are ignored in \cite{JayaramSubramanya2019}. Given an SSD-resident graph index $G=(V,E)$ of the vector dataset $\mathcal{X}$, query vector $\vec{x}_q$, beam size $B$, search width $L_s$, and the size $k$ of top-$k$ results, \Cref{alg:beamsearch} shows the procedure of beamsearch. (1) It starts search from a static graph-central entry vertex $v_e\in V$ with a candidate set $\mathcal{C}=\{v_e\}$ and an empty top-$k$ results $\mathcal{R}$ (lines 1-3). It's worth mentioning that vertices in $\mathcal{C}$ are ranked in ascending order of their PQ distances to $\vec{x}_q$ using the memory-resident quantized vectors. While the vertices in $\mathcal{R}$ are ranked in ascending order of their full distances to $\vec{x}_q$ using the SSD-resident original vectors. (2) It chooses the unvisited top-$B$ vertices (at most $B$) from $\mathcal{C}$, denoted by $\mathcal{F}$ and creates a page placeholder $P_j$ into $\mathcal{P}$ to register read request (line 6-10). Finally, it reads all required pages from SSD (line 11). (3) It uses all vertices from $\mathcal{F}$ to perform node expansion (lines 12-15). The node expansion phase mainly consists of $B$ times NeighborExpansion (Algorithm \ref{alg:neighbor_expand}). NeighborExpansion uses a data block $b_v=\langle \vec{x}_v,N(v) \rangle$ to update $\mathcal{C}$ with the neighbors $N(v)$ in ascending order of PQ distance to $\vec{x}_q$ and ensure $\mathcal{C}$'s length $\leq L_s$. Then, it updates the top-$k$ results $\mathcal{R}$ with $\vec{x}_v$ in ascending order of full distance to $\vec{x}_q$ and ensure $\mathcal{R}$ 's length $\leq k$. The re-ranking operation to $\mathcal{R}$ is the key to guarantee the search accuracy. (4) It repeats above until no new vertex is visited and returns $\mathcal{R}$ as the top-$k$ results.

\vspace{0.1cm}
\noindent\textbf{Problem definition.} Given a vector dataset $\mathcal{X}$, a memory constraint $M$, a graph index $G$ built for $\mathcal{X}$, and a query vector $\vec{x}_q$, DiskANN aims to maintain a PQ index within $M$ size in memory and leverage the SSD-resident $G$ to return the top-$k$ approximate nearest neighbors to $\vec{x}_q$ with a high Recall@$k$. 

\vspace{0.1cm}
\noindent\underline{Our goal.} On this basis, our goal in this paper is: Given the conditions unchanged, we aim to design a refined SSD layout of graph index and a new search algorithm based on such layout, to retrieve the top-$k$ results having the same Recall@$k$ as DiskANN, while improving the QPS by reducing the SSD I/O requests, i.e., increasing QPS without sacrificing accuracy.

\section{Query-sensitive entry vertex}
\label{sec:entry_vertex}
To solve the first I/O issue of DiskANN, we present a query-sensitive entry vertex selection strategy discussed in \S \ref{sec:entry_vertex_overview}.

\subsection{Method Overview}
\label{sec:entry_vertex_overview}

\noindent\textbf{Offline candidate entry vertices generation.} Given a vector dataset $\mathcal{X}$ and the Vamana graph index $G$ of $\mathcal{X}$, we generate the entry vertex candidates as follows. First, we employ the \textit{mini-batch-kmeans} \cite{Peng2018} to cluster $\mathcal{X}$ into $N_{\rm cluster}$ clusters $\{c_1,\cdots,c_{N_{\rm cluster}}\}$. Second, we take each $c_i$'s centroid as an input query $\vec{x}_q$ to find its top-$1$ nearest neighbor from $G$. Finally, we record all the nearest neighbors for $N_{\rm cluster}$ centroids in a linear table as the entry vertex candidates. 

\vspace{0.1cm}
\noindent\textbf{Online entry vertex selection. } Since each candidate is a representative vertex of a partition of $\mathcal{X}$, it's reasonable to take it as the entry vertex when the incoming query locates in the same partition. To this end, given the candidate entry vertices and a query vertex $\vec{x}_q$, we select a candidate entry vertex with the closest distance to $\vec{x}_q$ as the entry vertex.

\subsection{Theoretical Analysis of Effectiveness}
\label{sec:analysis_entry_vertex}
We next theoretically prove that using query-sensitivity entry vertex can provide a tighter upper bound on the routing length than that of graph-central entry vertex. Since the graph index of DiskANN is developed based on Monotonic Search Network (MSNET) \cite{JayaramSubramanya2019}, similar to \cite{Fu2017}, we leverage MSNET's monotonicity to complete our analysis.

\begin{definition}
\label{def:mp}
\textbf{Monotonic Path}. Given a graph index $G(V,E)$ for a vector dataset $\mathcal{X}$. Let $v_s,v_t\in V$, an $l$-hop path $\mathcal{P}(v_s,v_t)$ from $v_s$ to $v_t$ is a Monotonic Path, iff $\exists v_1,\cdots,v_{l+1}\in V$ ($v_1=v_s, v_{l+1}=v_t$) satisfying: (1) each pair of adjacent vertices in $\mathcal{P}$ has an edge $e(v_i,v_{i+1})\in E$, (2) $\delta(\phi^{-1}(v_t),\phi^{-1}(v_{i+1}))<\delta(\phi^{-1}(v_t),\phi^{-1}(v_{i}))$, where $\phi^{-1}(v)$ denotes $v$'s vector $\vec{x}_v\in \mathcal{X}$ (Definition \ref{def:graph_index}). This implies that the greater the number of hops (in a path) from $v_i$ to $v_t$, the larger the vector distance between them. We call such a path a monotonic path $\mathcal{MP}(v_s,v_t)$.
\end{definition}

\begin{definition}
\label{def:msnet}
\textbf{Monotonic Search Network}. Given a graph index $G(V,E)$ for a vector dataset $\mathcal{X}$. $G$ is a Monotonic Search Network, iff $\forall v_s,v_t\in V, \exists \mathcal{MP}(v_s,v_t)$ on $G$.
\end{definition}

We provide a theorem showing the relation between the entry vertex and the upper bound on a routing path's length. 

\begin{theorem}
\label{th:1}
Given a vector dataset $\mathcal{X}$ distributed in a unit sphere $\mathcal{B}$ in $\mathbb{R}^d$, a MSNET $G(V,E)$, and a query $\vec{x}_q\in \mathcal{B}$. For query-sensitivity entry vertex, we have $N_{\rm cluster}$ entry vertex candidates $\{v_{c_1},\cdots,v_{c_{N_{\rm cluster}}}\}$, where $v_{c_j}$ is the top-$1$ nearest neighbor to the centroid of cluster $c_j$. For the static entry vertex used in DiskANN, we use $v_{c_0}$ to denote the graph-central vertex. $\vec{x}_{q^*}$ is the top-$1$ nearest neighbor of $\vec{x}_q$ within $\mathcal{X}$ ($v_{q^*}=\phi(\vec{x}_{q^*})$ is the vertex in $G$). Given above assumptions, the following inequality holds for $\exists j\in \{1,\cdots,N_{\rm cluster}\}$:
\begin{equation}
    \label{eq:1}
    \overline{\left| \mathcal{MP}(v_{c_j},v_{q^*}) \right|} \le \overline{\left| \mathcal{MP}(v_{c_0},v_{q^*}) \right|}\quad ,
\end{equation}
where $\overline{\left| \mathcal{MP}(\cdot,\cdot) \right|}$ represents the upper bound on the length of a Monotonic Path, i.e., the number of hops.
\end{theorem}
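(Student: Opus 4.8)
The plan is to bound the monotonic-path length in terms of the geometric distance from the entry vertex to the target, and then argue that the query-sensitive candidate closest to $\vec{x}_q$ is (for at least one cluster index $j$) no farther from $v_{q^*}$ than the graph-central vertex $v_{c_0}$ is. The key structural fact I would exploit is the monotonicity of MSNET (Definition~\ref{def:msnet}): along a monotonic path $\mathcal{MP}(v_s,v_t)$, the sequence of distances $\delta(\phi^{-1}(v_t),\phi^{-1}(v_i))$ is strictly decreasing in $i$. So the first step is to convert "strictly decreasing distances" into a quantitative bound on the number of hops. For vectors in the unit sphere $\mathcal{B}\subset\mathbb{R}^d$, one expects that each hop of a greedy/monotonic route shrinks the distance-to-target by at least some multiplicative factor $\alpha<1$ (this is exactly the kind of contraction that the $\alpha$-pruning in Vamana/MSNET is designed to guarantee); alternatively one uses an additive lower bound on the decrease coming from a covering/packing argument on the sphere. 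Either way, I would derive something of the shape
\begin{equation}
\label{eq:hopbound}
\bigl|\mathcal{MP}(v_s,v_{q^*})\bigr| \;\le\; f\!\left(\delta\bigl(\phi^{-1}(v_s),\vec{x}_{q^*}\bigr)\right),
\end{equation}
where $f$ is a nondecreasing function (e.g.\ $f(r)=\log_{1/\alpha}(r/r_{\min})$ with $r_{\min}$ the minimal inter-point distance, or a linear/polynomial function of $r$ under the additive version). This gives $\overline{|\mathcal{MP}(v_s,v_{q^*})|} := f\bigl(\delta(\phi^{-1}(v_s),\vec{x}_{q^*})\bigr)$ as a concrete definition of the upper bound referenced in the theorem statement.

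\textbf{Next I would compare the two starting points geometrically.} Since $f$ is monotone, \eqref{eq:1} reduces to showing that for some $j$,
\begin{equation}
\label{eq:distcompare}
\delta\bigl(\phi^{-1}(v_{c_j}),\vec{x}_{q^*}\bigr) \;\le\; \delta\bigl(\phi^{-1}(v_{c_0}),\vec{x}_{q^*}\bigr).
\end{equation}
Here I would use the clustering structure: the centroids $\{\text{centroid}(c_1),\dots,\text{centroid}(c_{N_{\rm cluster}})\}$ partition $\mathcal{B}$, so $\vec{x}_q$ (hence $\vec{x}_{q^*}$, which is within $O(r_{\min})$ of $\vec{x}_q$) lies in some cell $c_j$, and $v_{c_j}$ is the nearest neighbor in $G$ of that cell's centroid. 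A triangle-inequality chain bounds $\delta(\phi^{-1}(v_{c_j}),\vec{x}_{q^*})$ by (distance from $v_{c_j}$ to its centroid) $+$ (centroid to $\vec{x}_q$) $+$ ($\vec{x}_q$ to $\vec{x}_{q^*}$), each term being small — at most the cluster radius plus $r_{\min}$. Meanwhile $v_{c_0}$ is a fixed vertex, so $\delta(\phi^{-1}(v_{c_0}),\vec{x}_{q^*})$ can be as large as the diameter of $\mathcal{B}$ for adversarially placed queries; more precisely, averaging over $\vec{x}_q$ uniform in $\mathcal{B}$, the expected value of the right-hand side of \eqref{eq:distcompare} is a constant independent of $N_{\rm cluster}$, while the left-hand side is $O(1/N_{\rm cluster}^{1/d})$. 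So \eqref{eq:distcompare} holds whenever the clustering is fine enough, which is the regime the theorem implicitly assumes.

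\textbf{The main obstacle} is making precise the hop-length bound in \eqref{eq:hopbound} purely from the MSNET/monotonicity hypothesis without smuggling in extra assumptions about $G$'s degree or the $\alpha$-parameter. Monotonicity alone guarantees a decreasing distance sequence but not a uniform per-hop contraction, so in the worst case a monotonic path could be as long as $|V|$. To get a genuine \emph{upper bound} that is a function of the starting distance, I would either (i) invoke the specific construction of Vamana's graph (the $\alpha$-RNG property, which does give geometric contraction of the residual distance, as in the analysis of \cite{Fu2017}), or (ii) appeal to a sphere-packing argument: on a monotonic path the vertices $v_1,\dots,v_{l+1}$ have pairwise-distinct, strictly decreasing distances to $v_{q^*}$, and combined with a minimum separation $r_{\min}$ between the underlying vectors one can bound how many such vertices fit inside the ball of radius $\delta(\phi^{-1}(v_s),\vec{x}_{q^*})$ around $\vec{x}_{q^*}$. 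Route (ii) is cleaner and dataset-agnostic, so that is what I would present, being careful that the resulting $f$ is genuinely nondecreasing so the comparison step goes through. A secondary subtlety is handling the gap between $\vec{x}_q$ and $\vec{x}_{q^*}$ (the query need not coincide with a dataset point), but that is absorbed into a single extra triangle-inequality term of size $r_{\min}$ and does not affect the asymptotics.
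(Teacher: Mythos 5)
Your proposal is correct and follows essentially the same route as the paper's proof: the paper also converts monotonicity into a quantitative hop bound via a volume-contraction argument (using the minimum per-hop distance decrement $R^*$, the analogue of your $r_{\min}$-based packing step, to get a geometric shrinkage of the ball around $\vec{x}_{q^*}$), obtaining an upper bound $f(\overline{R})$ that is monotonically increasing in the cluster diameter $\overline{R}$, and then concludes by noting that $N_{\rm cluster}>1$ forces $\overline{R}\le\tfrac12$ while the static entry vertex corresponds to the full unit sphere. Your worry that monotonicity alone gives no uniform contraction is exactly the issue the paper resolves with $R^*$, and your ``route (ii)'' is the argument it actually uses.
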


\begin{proof}
\label{proof:th1}
We prove this theorem with the following two cases. 
\noindent\textbf{Case 1.} Suppose $N_{\rm cluster}=1$, then we have $v_{c_1}=v_{c_0}$. This is because $c_1$ is the entire graph $G$ and the central vertex of $G$ actually is the top-$1$ nearest neighbor of $c_1$'s centroid. Thus, we have $\overline{\left| \mathcal{MP}(v_{c_1},v_{q^*}) \right|}=\overline{\left| \mathcal{MP}(v_{c_0},v_{q^*}) \right|}$.

\vspace{0.1cm}
\noindent\textbf{Case 2.} For $N_{\rm cluster}>1$, we introduce the concept of \textit{open sphere} to derive the upper bound on a routing path's length.

Let $\mathcal{H}(\vec{x}_{q^*},\theta)$ denotes an open sphere in $\mathbb{R}^d$ with center $\vec{x}_{q^*}$ and radius $\theta$, $\mathcal{H}_{\rm vol}(\vec{x}_{q^*},\theta)$ denotes the volume of $\mathcal{H}(\vec{x}_{q^*},\theta)$. Considering a monotonic path $\mathcal{MP}(v_{c_j},v_{q^*})$ involving vertices $\left\{v_1,\cdots,v_{l+1},v_{q^*} \right\}$ ($v_1=v_{c_j}$ is an entry vertex), for simplicity, we use the notation ${\rm Vol}_{i}$ to denote the volume of a sphere, i.e., ${\rm Vol}_{i}=\mathcal{H}_{\rm vol}(\vec{x}_{q^*},\delta(\vec{x}_{q^*},\vec{x}_{v_i}))$, for $i\in \{1,2,...,l\}$. Since the volume of a sphere is calculated as
\begin{equation}
        \begin{aligned}
            \mathcal{H}_{\rm vol}(\cdot,\theta)=\frac{(\sqrt{\pi}\theta)^d}{\Gamma(\frac{d}{2})+1}\quad ,
        \end{aligned} 
\end{equation}
then we have the following for $i\in \{1,2,...,l\}$:
\begin{equation}
        \label{eq:eq2}
        \begin{aligned}
        \frac{\mathit{{\rm Vol}}_{i+1}}{\mathit{{\rm Vol}}_i}=(\frac{\delta(\vec{x}_{q^*},\vec{x}_{v_{i+1}})}{\delta(\vec{x}_{q^*},\vec{x}_{v_i})})^d \quad .
        \end{aligned}
\end{equation}

We next show the relationship between $\delta(\vec{x}_{q^*},\vec{x}_{v_{i+1}})$ and $\delta(\vec{x}_{q^*},\vec{x}_{v_i})$. Given a cluster $c_j$, its diameter is $\overline{R}={\max}(\delta(\vec{x}_u,\vec{x}_v))$ for $\vec{x}_u, \vec{x}_v \in c_j$. We use $R^*={\min}\left| \delta(\vec{x}_{q^*},\vec{x}_{v_i})-\delta(\vec{x}_{q^*},\vec{x}_{v_j}) \right|$ to represent the minimum distance difference in any monotonic path $\mathcal{MP}(v_s,v_{q^*})$ from an arbitrary vertex $v_s$ to $v_{q^*}$, where $v_i,v_j\in \mathcal{MP}(v_s,v_{q^*})$ for $i\neq j$. According to Definition \ref{def:mp}, we have $\delta(\vec{x}_{q^*},\vec{x}_{v_i})>\delta(\vec{x}_{q^*},\vec{x}_{v_{i+1}})$, so that $\delta(\vec{x}_{q^*},\vec{x}_{v_i})-\delta(\vec{x}_{q^*},\vec{x}_{v_{i+1}})\geq R^*>0$ naturally holds. Thereby, due to $\overline{R}\geq \delta(\vec{x}_{q^*},\vec{x}_{v_i})\geq \delta(\vec{x}_{q^*},\vec{x}_{v_i})-\delta(\vec{x}_{q^*},\vec{x}_{v_{i+1}})\geq R^*$, we have
    \begin{equation}
    \small
        \label{eq:eq3}
        \begin{aligned}
         \overline{R}\cdot (\delta(\vec{x}_{q^*},\vec{x}_{v_i})-\delta(\vec{x}_{q^*},\vec{x}_{v_{i+1}}))&\geq \delta(\vec{x}_{q^*},\vec{x}_{v_i})\cdot R^* \\ 
        \Rightarrow \frac{\overline{R}-R^*}{\overline{R}}&\geq \frac{\delta(\vec{x}_{q^*},\vec{x}_{v_{i+1}})}{\delta(\vec{x}_{q^*},\vec{x}_{v_i})}\quad .
        \end{aligned}
 \end{equation}
    
According to Eq. \ref{eq:eq2} and Eq. \ref{eq:eq3}, we have 
\begin{equation}
\small
        \begin{aligned}
        \frac{\mathit{{\rm Vol}}_{i+1}}{\mathit{{\rm Vol}}_i}\le (\frac{\overline{R}-R^*}{\overline{R}})^d\quad .
        \end{aligned}
 \end{equation}
 
Given $\overline{\mathit{{\rm Vol}}}=\mathcal{H}_{\rm vol}(\vec{x}_{q^*},\overline{R})$, we have:
\begin{equation}
\small
        \begin{aligned}
            \mathit{{\rm Vol}}_{l+1}               &\le \mathit{{\rm Vol}}_{l}\cdot(\frac{\overline{R}-R^*}{\overline{R}})^d \\
                                             &\le \mathit{{\rm Vol}}_1\cdot(\frac{\overline{R}-R^*}{\overline{R}})^{ld} \le \overline{\mathit{{\rm Vol}}}\cdot(\frac{\overline{R}-R^*}{\overline{R}})^{ld} \\
            \Rightarrow \log(\mathit{{\rm Vol}}_{l+1}) &\le \log(\overline{\mathit{{\rm Vol}}})+ld\cdot \log(\frac{\overline{R}-R^*}{\overline{R}}) \\
            \Rightarrow l                    &\le \frac{
                                                        \log(\delta(\vec{x}_{q^*},\vec{x}_{v_{l+1}}))-\log(\overline{R})
                                                    }{
                                                        \log(\overline{R}-R^*)-\log(\overline{R})
                                                    } \triangleq f(\overline{R})\quad .
        \end{aligned}
\end{equation}

Above derivation shows the upper bound on the length of $l$ is $f(\overline{R})$, we next analyze the monotonicity of $f(\overline{R})$ as
    \begin{equation}
    \small
        \begin{aligned}
            \frac{df}{d\overline{R}}
            =& \frac{
                \frac{1}{\overline{R}}[\log\overline{R}-\log(\overline{R}-R^*)]
            }{
                (\log(\overline{R}-R^*)-\log(\overline{R}))^2
            }\\
            +&
            \frac{
                (\log\overline{R}-\log\delta(\vec{x}_{q^*},\vec{x}_{v_{l+1}}))(\frac{1}{\overline{R}-R^*}-\frac{1}{\overline{R}})
            }{
                (\log(\overline{R}-R^*)-\log(\overline{R}))^2
            }\quad .
        \end{aligned}
    \end{equation}
    Since we have the following inequalities:
    \begin{itemize}
        \item $\log\overline{R}-\log(\overline{R}-R^*)>0$\quad ,
        \item $\log\overline{R}-\log\delta(\vec{x}_{q^*},\vec{x}_{v_{l+1}})>0$\quad ,
        \item $\frac{1}{\overline{R}-R^*}-\frac{1}{\overline{R}}>0$\quad ,
    \end{itemize}
    $f(\overline{R})$ is monotonically increasing. Moreover, since we assume $N_{\rm cluster}>1$, we have $\overline{R} \leq \frac{1}{2}$ and the following holds.
\end{proof}
\vspace{-0.7cm}
\begin{equation}
    \small
       \overline{\left| \mathcal{MP}(v_{c_j},v_{q^*}) \right|}= f(\overline{R})+1\le f(\frac{1}{2})+1=\overline{\left| \mathcal{MP}(v_{c_0},v_{q^*}) \right|}
\end{equation}

\subsection{Complexity Analysis}
\label{sec:complexity_entry_vertex}
\noindent\textbf{Complexity.} The time complexity of offline candidates generation stems from the mini-batch-kmean, which is $O(r\cdot N_{\rm batch}\cdot N_{\rm cluster}\cdot d)$ \cite{Peng2018}, where $r$ is the number of clustering iterations, $N_{\rm batch}$ is the mini-batch size, $N_{\rm cluster}$ is the number of entry vertices, and $d$ is the dimensionality of a vector dominating the distance calculation's cost. The time complexity of online entry vertex selection arises from the distance calculations between each candidate and $\vec{x}_q$, which is $O(N_{\rm cluster}\cdot d$).

\vspace{0.1cm}
\noindent\textbf{Remarks.} (1) The efficiency improvement brought by query-sensitivity entry vertex comes from the trade-off between the time spent for online entry vertex selection (in memory) and the time saved for SSD I/O. Since reading an SSD page is at least 10 X slower than that of reading memory, it's natural that we have a good efficiency compared with DiskANN. (2) In practical, we prefer a smaller $N_{\rm cluster}$ to decrease the overhead of query-sensitivity entry vertex selection, when SSD's I/O bandwidth is large (or we say that SSD I/O is fast). Otherwise, a larger $N_{\rm cluster}$ would be better. We experimentally show $N_{\rm clsuter}$'s effect under diverse SSD I/O bandwidth in \S \ref{sec:parameter}.

\section{Isomorphic Mapping on Vamana}
\label{sec:isomorphic}
In this section, we present an isomorphic mapping on Vamana to refine its SSD layout, thus increasing the data value of each SSD page (\S \ref{sec:pack}), then analyze its effectiveness via \textit{algebraic connectivity} \cite{Fiedler1973} (\S \ref{sec:connectivity}), and show its complexity (\S \ref{sec:complexity_mapping}). In \S \ref{sec:pagesearch}, we will discuss how to use such a refined SSD layout to accelerate search via a page-based search.


\subsection{Pack-Merge-based Method}
\label{sec:pack}
Given the page capacity of $b$, DiskANN writes data blocks to SSD pages in a round-robin (Figure \ref{fig:order_step} (a)). In this way, we can quickly compute a given vertex's resident page, yielding an efficient SSD addressing. For example, given a vertex $v_i$, $b_{v_i}$ is the $i\%b$-th data block of the $\lceil\frac{i}{b}\rceil$-th SSD page. 

We aim to refine the SSD layout by increasing the data value of each SSD page while keeping the original addressing mode unchanged. Achieving this is non-trivial. \textit{Edge-based graph partitioning} methods \cite{Zhang2017,Xie2014,Petroni2015} suffer from the redundancy in vertices, e.g., one vertex would appear multiple times in different SSD pages, invalidating original addressing mode. \textit{Vertex-based graph partitioning} methods \cite{Karypis1970,Stanton2012,Tsourakakis2014} fail to ensure that each SSD page contains the same number of vertices, also undermining the calculation of SSD offsets. \textit{Graph reordering} method \cite{Wei2016} keeps the addressing mode unchanged, but they load the entire graph and construct a reverse index in memory render them inapplicable to large-scale graphs. We compare our solution to representative reordering methods in \S \ref{sec:evaluation}.

Different from them, we present a low-memory overhead, low-time complexity lightweight isomorphic mapping on Vamana, while retaining the original rapid SSD addressing mode.

\begin{figure}
	\vspace{-0.2cm}
    \centering
    \includegraphics[scale=0.44]{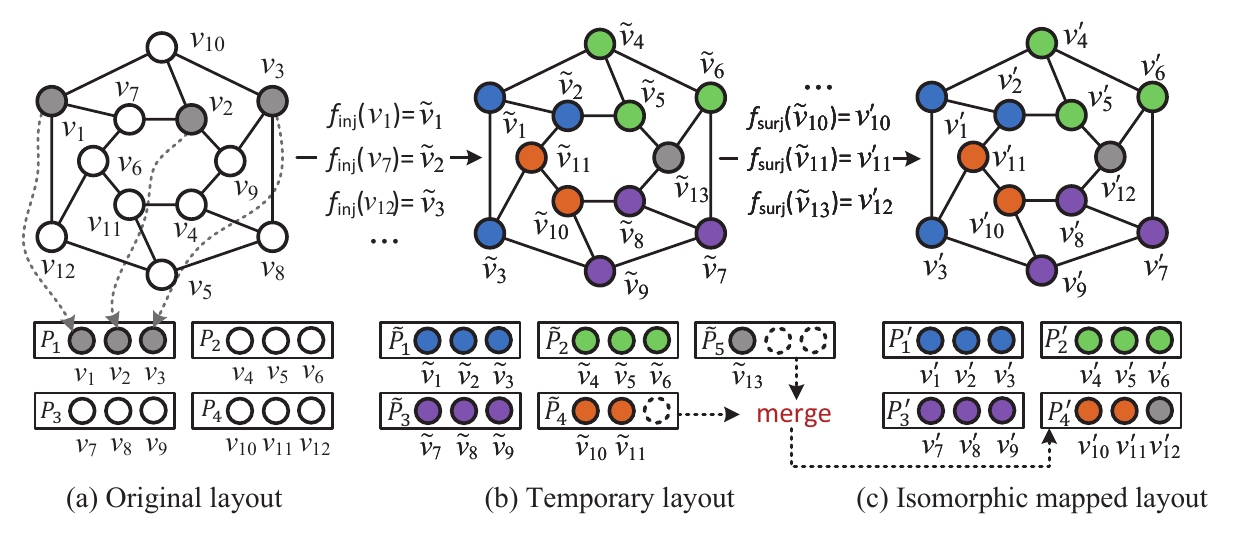}
    \vspace{-0.8cm}
    \caption{Isomorphic mapping on a graph index}
    \label{fig:order_step}
    \vspace{-0.1cm}
\end{figure}

\vspace{-0.1cm}
\begin{definition}[Isomorphic mapping]
\label{def:isomorphic}
Given two SSD layouts $L$ and $L'$ for graphs $G=(V,E)$ and $G'=(V',E')$. We say $f: L.{\rm IDs} \rightarrow L'.{\rm IDs}$ is an isomorphic mapping on their logical view (or logical layout), iff $f$ is a bijection satisfying three conditions: (1) $L$ and $L'$ contain the same number of data blocks. (2) $\forall b_{v_i}, b_{v_j}\in L$, if their identities $v_i\neq v_j$, then $f(v_i)\neq f(v_j)$. (3) $\forall b_{v_i}, b_{v_j}\in L$, if $\exists e(v_i,v_j)\in E$, then $e(v'_i,v'_j)\in E'$ where $v'_i=f(v_i)$ and $v'_j=f(v_j)$.
\end{definition}

\vspace{-0.1cm}
Figure \ref{fig:order_step}(c) shows an isomorphic mapped SSD layout of a graph. Each data block $b_{v}$ of $v\in V$ has a mapped $b_{v'}$ of $v'\in V'$ and all the data blocks on SSD pages retain the ascending order of vertex ID, thus we can efficiently access SSD pages using the same addressing mode as DiskANN. Since there exists various isomorphic mapping $f$, we need to implement one that can increase the data value of SSD pages. So, we present our \textit{pack-merge-based method} to return a logical layout $L'.{\sf IDs}$ with an isomorphic mapping $f:L.{\sf IDs}\rightarrow L'.{\sf IDs}$. It consists of two steps: packing (Algorithm \ref{algorithm:packing}) and merging (Algorithm \ref{algorithm:merging}). Packing aims to return a temporary logical layout $\tilde{L}.{\sf IDs}$ by an injection $f_{\rm inj}$ from $L.{\sf IDs}$ (Figure \ref{fig:order_step}(b)) and merging aims to return a final logical layout $L'.{\sf IDs}$ by a surjection $f_{\rm surj}$ from $\tilde{L}.{\sf IDs}$. The isomorphic mapping $f=f_{\rm surj}(f_{\rm inj}(\cdot))$ is a bijection from $L.{\sf IDs}$ to $L'.{\sf IDs}$. 


\setlength{\textfloatsep}{0cm}
\begin{algorithm}[t]
\setstretch{0.95}
\small
\caption{Packing($G$, $L.{\sf IDs}$, $b$)}
\label{algorithm:packing}
\LinesNumbered
\KwIn{$G$, $L.{\sf IDs}$, $b$}
\KwOut{temporary logical layout $\tilde{L}.{\sf IDs}$ with $f_{\rm inj}$}
\tcp{{\footnotesize Initialization: line 1}}
${\sf Visit} \leftarrow \emptyset$, $\tilde{L}.{\sf IDs} \leftarrow \emptyset$, ${\sf newID} \leftarrow 1$, $f_{\rm inj}\leftarrow \emptyset$\;
\tcp{{\footnotesize Star packing: lines 2-18}}
\For{$v \in V$ ${\rm \&\&}$ $v\notin {\sf Visit}$}{
        ${\sf Visit} \leftarrow {\sf Visit} \cup \{v\}$ \;
        $\tilde{P}.{\sf IDs} \leftarrow \{v\}$ \tcc*[r]{{\scriptsize a new temporary logical page}}
        sort $N(v)\subseteq V$ in ascending order of PQ distance\;
        \tcp{{\footnotesize update $\tilde{P}.{\sf IDs}$ with at most $b$ vertices}}
        \For{$v_i \in N(v)$ ${\rm in\ order}$}{
            \eIf{$|\tilde{P}.{\sf IDs}|<b$}{
                $\tilde{P}.{\sf IDs} \leftarrow \tilde{P}.{\sf IDs} \cup \{v_i\}$ \;
                ${\sf Visit} \leftarrow {\sf Visit} \cup \{v_i\}$ \;
            }{
                break \;
            }
        }
        \If{$|\tilde{P}.{\sf IDs}|<b$}{
                pad $\tilde{P}.{\sf IDs}$ with zero \tcc*[r]{{\scriptsize page alignment}}
        }
        $\tilde{L}.{\sf IDs} \leftarrow \tilde{L}.{\sf IDs} \cup \tilde{P}.{\sf IDs}$
}
\tcp{{\footnotesize Injection from $L.{\sf IDs}\rightarrow \tilde{L}.{\sf IDs}$: lines 19-26}}
\For{$\tilde{P}.{\sf IDs} \subset \tilde{L}.{\sf IDs}$}{
    \For{$v_i \in \tilde{P}.{\sf IDs}$}{
        $j={\rm newID}$++\;
        $f_{\rm inj}{\sf .put(}v_i,\tilde{v}_j{\sf )}$ \tcc*[r]{{\scriptsize update with $f_{\rm inj}(v_i)=\tilde{v}_j$}}     
        update $\tilde{P}.{\sf IDs}$ with $\tilde{v}_j$\;
    }
    ${\sf newID}=(\lceil\frac{{\sf newID}}{b}\rceil-1)\cdot b+1$\;
}
return $\tilde{L}.{\sf IDs}$ and $f_{\rm inj}$ \;
\end{algorithm}

\setlength{\textfloatsep}{0cm}
\begin{algorithm}[t]
\setstretch{0.95}
\small
\caption{Merging($\tilde{L}.{\sf IDs}$, $b$)}
\label{algorithm:merging}
\LinesNumbered
\KwIn{$\tilde{L}.{\sf IDs}$, $b$}
\KwOut{final logical layout $L'.{\sf IDs}$ with $f_{\sf surj}$}
\tcp{{\footnotesize Initialization: lines 1-2}}
$L'\leftarrow \emptyset$, ${\sf newID} \leftarrow 1$, $f_{\rm surj}\leftarrow \emptyset$\;
sort temporary logical pages of $\tilde{L}.{\sf IDs}$ in descending order of page size, i.e., the number of non-zero logical blocks\;
\For{$\forall \tilde{P}_i.{\sf IDs} \subset \tilde{L}.{\sf IDs}$ ${\rm in \ order}$}{
    \eIf{$|\tilde{P}_i.{\sf IDs}|==b$}{
     	$P'_i.{\sf IDs} \leftarrow \tilde{P}_i.{\sf IDs}$ \tcc*[r]{{\scriptsize copy a logical page}}
    		$\tilde{L}.{\sf IDs}\leftarrow \tilde{L}.{\sf IDs}\setminus \tilde{P}_i.{\sf IDs}$ \;
    }{
    		\tcp{{\footnotesize FFD-based merge: lines 8-13}}
   		 \For{$\forall \tilde{P}_j.{\sf IDs} \in \tilde{L}.{\sf IDs}$ ${\rm in \ order}$}{
            \If{$|\tilde{P}_i.{\sf IDs}|+|\tilde{P}_j.{\sf IDs}|\leq b$}{
            		$P'.{\sf IDs} \leftarrow \tilde{P}_i.{\sf IDs}+\tilde{P}_j.{\sf IDs}$ \tcc*[r]{{\scriptsize merge}}
				$\tilde{L}.{\sf IDs}\leftarrow \tilde{L}.{\sf IDs}\setminus \tilde{P}_i.{\sf IDs}\cup\tilde{P}_j.{\sf IDs}$ \;                
            }
        }
    }
   \tcp{{\footnotesize Surjection from $\tilde{L}.{\sf IDs}\rightarrow L'.{\sf IDs}$: lines 15-21}}
	\For{$\forall \tilde{v}_i\in P'_.{\sf IDs}$}{
			$j={\rm newID}$++\;
        		$f_{\rm surj}{\sf .put(}\tilde{v}_i,v'_j{\sf )}$ \tcc*[r]{{\scriptsize update with $f_{\rm inj}(\tilde{v}_i)=v'_j$}}     
        		update $P'.{\sf IDs}$ with $v'_j$\;
	}   
	${\sf newID}=(\lceil\frac{{\sf newID}}{b}\rceil-1)\cdot b+1$\;
	$L'.{\sf IDs}\leftarrow L'.{\sf IDs}\cup P'.{\sf IDs}$\;
}
return $L'.{\sf IDs}$ and $f_{\rm surj}$\;
\end{algorithm}

\vspace{0.1cm}
\noindent\textbf{Packing stage.} Given a graph index $G=(V,E)$ and a page capacity $b$, Algorithm \ref{algorithm:packing} returns $\tilde{L}.{\sf IDs}$ with $f_{\rm inj}$ by three steps. 


\vspace{0.1cm}
\noindent\underline{Initialization.} We initialize a set ${\sf Visit}=\emptyset$ to avoid repeated visits, a temporary logical layout $\tilde{L}.{\sf IDs}=\emptyset$, a vertex ID iterator ${\sf newID}$ from 1, and an empty map $f_{\rm inj}$ (line 1).

\vspace{0.1cm}
\noindent\underline{Star packing.}  For each unvisited vertex $v\in V$, we add it to a temporary logical page $\tilde{P}.{\sf IDs}$ and mark it as a visited vertex (lines 3-4). Then, we add $v$'s $(b-1)$ nearest neibhors (using PQ distance) from $G$ to the same logic page $\tilde{P}.{\sf IDs}$ (lines 5-13). If $N(v)<b$, we pad $\tilde{P}.{\sf IDs}$ with zeros (lines 14-16). We next add $\tilde{P}.{\sf IDs}$ to $\tilde{L}.{\sf IDs}$ and repeat above (line 17) until all vertices have been visited. Since $v$ and its $(b-1)$ nearest neighbors belong to the same page, the induced graph of them is a star-derived graph. In \S \ref{sec:connectivity}, we provide an effectiveness analysis using the properties of star-derived graph. 


\vspace{0.1cm}
\noindent\underline{Injective mapping.} Given a $\tilde{L}.{\sf IDs}$, we obtain the injection $f_{\rm inj}$ as follows. First, for each vertex $v_i\in \tilde{P}.{\sf IDs}\subset \tilde{L}.{\sf IDs}$, we update $f_{\rm inj}$ with an item $f(v_i)=\tilde{v}_j$, where $j={\sf newID}$ (lines 20-24). We next update ${\sf newID}$ (line 25) and repeat above until all logical pages in $\tilde{L}.{\sf IDs}$ have been visited. 


\vspace{-0.15cm}
\begin{example}
\label{exp:packing}
Figure \ref{fig:order_step}(b) shows a temporary logical layout obtained from Figure\ref{fig:order_step}(a). We use the same color to represent vertices in the same temporary page. Given the page capacity $b=3$, we first assign $v_1$ and its two nearest neighbors $v_7,v_{12}$ to the first page, pad non-full pages with zeros (e.g., the 4th and 5th pages), and repeat this until all vertices are processed. Next, we update the vertex IDs and update the injection $f_{\rm inj}$ with $f_{\rm inj}(v_1)=\tilde{v}_1$, $f_{\rm inj}(v_2)=\tilde{v}_5$, $f_{\rm inj}(v_3)=\tilde{v}_6$, etc.
\end{example}


\noindent\textbf{Merging stage.} The temporary logical layout has one problem: some pages are not full (as we pad zeros for nodes having neighbors $<b$), and it invalidates the original addressing mode. We present a merging stage with the goal of implementing a surjection $f_{\rm surj}:\tilde{L}.{\sf IDs}\rightarrow L'.{\sf IDs}$ to combine data blocks from non-full pages to form a full page. Given a $\tilde{L}.{\sf IDs}$ and a page capacity $b$, we do merging (Algorithm \ref{algorithm:merging}) as follows.

\vspace{0.1cm}
\noindent\underline{Initialization.} We initialize a final logical layout $L'.{\sf IDs}=\emptyset$, a vertex ID iterator ${\sf newID}$ from 1, and a map $f_{\rm surj}$ (line 1).

\vspace{0.1cm}
\noindent\underline{FFD-based merge.} We first sort temporary logical pages of $\tilde{L}.{\sf IDs}$ in descending order of page size. The logical page's size is the number of non-zero logical blocks in it, denoted by $|\tilde{P}.{\sf IDs}|$. We retain the temporary logical pages with size $=b$ in $L'.{\sf IDs}$ (lines: 4-6 and 21) and merge others having size $<b$ to form new logical pages (lines: 8-13 and 21). Specifically, we merge two temporary logical pages by bin packing based on First-Fit-Decreasing (FFD): we iteratively merge the largest non-full logical page with another smaller logical page to form a new full logical page until no more merge can be performed.

\vspace{0.1cm}
\noindent\underline{Surjection mapping.} Given a new logical page $P'.{\sf IDs}$ that is retained from $\tilde{P}.{\sf IDs}$ (line 5) or merged by two logical pages from $\tilde{L}.{\sf IDs}$ (line 10), we obtain the surjection $f_{\rm surj}$ as follows. For every vertex $\tilde{v}$ from $P'.{\sf IDs}$, we update $f_{\rm surj}$ with an item $f_{\rm surj}(\tilde{v}_i)=v'_j$, where $j={\sf newID}$ (lines 15-19). Next, we update ${\sf newID}$ (line 20) and add $P'.{\sf IDs}$ to $L'.{\sf IDs}$ (line 21). We repeat above until all logical pages from $\tilde{L}.{\sf IDs}$ have been processed and return the final $L'.{\sf IDs}$ with $f_{\rm surj}$ (line 23).

\begin{example}
\label{exp:merging}
Figure \ref{fig:order_step}(c) shows the final layout obtained from Figure \ref{fig:order_step}(b). We retain the first three full pages in the final logical layout. For the 4th page, it contains only two valid vertices so that we merge it with the 5th page. Then, we update the vertex IDs and update the surjection $f_{\rm surj}$ with $f_{\rm surj}(\tilde{v}_{10})=v'_{10}$, $f_{\rm surj}(\tilde{v}_{11})=v'_{12}$, and $f_{\rm surj}(\tilde{v}_{13})=v'_{12}$, etc. 
\end{example}

\vspace{0.1cm}
\noindent\textbf{Update the SSD layout using $f_{\rm inj}$ and $f_{\rm surj}$.} Given the original logical layout $L.{\sf IDs}$ and the output final logical layout $L'.{\sf IDs}$ with two mappings $f_{\sf inj}$ and $f_{\sf surj}$, we update SSD layout $L'$ with the real data blocks as follows. Given a logical page $P.{\sf IDs}\subset L.{\sf IDs}$, for each vertex $v_i\in P.{\sf IDs}$ with data block $b_{v_i}=\langle \vec{x}_{i},N(v_i)\rangle$, we first get $v_i$'s mapping vertex $v'_j=f_{\rm surj}(f_{\rm inj}(v_i))$. Then, we form $b_{v'_j}$ as $\langle \vec{x}_{v'_j},N(v'_j)\rangle$, where $\vec{x}_{v'_j}=\vec{x}_{v_i}$ because both $v_i$ and $v'_j$ represent the same vertex but with different IDs. For each vertex from $N(v_i)$, we obtain its mapping vertex using $f_{\rm surj}(f_{\rm inj}(\cdot))$ and add it to $N(v'_j)$. Finally, we write all reformed data blocks to their corresponding positions in $L'$ according to $L'.{\sf IDs}$. 

\subsection{Effectiveness Analysis of Refined Layout}
\label{sec:connectivity}
Since our intention is to increase the data value of each SSD page, it is necessary to evaluate the compactness of an SSD page after isomorphic mapping. We present a new metric called \textit{page compactness} based on two widely used metrics: \textit{diameter} and \textit{algebraic connectivity} \cite{Fiedler1973}. Graph diameter is defined as the longest shortest path between any two vertices of a graph. The larger the diameter, the less the closeness between any two vertices. Given an SSD page consisting of $b$ data blocks, we compute the diameter of the induced graph $G[V_b]$ of these $b$ vertices by Eq. \ref{eq:diameter}, where \textit{\text{shortest\_path}}$(u,v)$ returns the length of the shortest path between $u$ and $v$.
\begin{equation}
\label{eq:diameter}
\textit{\text{diam}}(G[V_b]) = \max_{u, v \in V_b} \textit{\text{shortest\_path}}(u, v)
\end{equation}

Algebraic connectivity reflects the global connectivity of a graph. It is the second-smallest eigenvalue of the Laplacian matrix \cite{Fiedler1973} of a graph. Given a induced graph $G[V_b]$ of an SSD page's $b$ vertices, it's algebraic connectivity $\lambda_2(G[V_b])$ is computed by Eq. \ref{eq:algebraic}, where $Lap(G[V_b])$ is the Laplacian matrix of $G[V_b]$ and $\xi$ is the eigenvector of $Lap(G[V_b])$.
\begin{equation}
\label{eq:algebraic}
\lambda_2(G[V_b])=\min_{\genfrac{}{}{0pt}{3}{\xi \perp \mathbf{1}}{\xi \neq \mathbf{0}}} \frac{\xi^T Lap(G[V_b])\ \xi}{\xi^T \xi}
\end{equation}

Given a graph $G[V_b]$, its Laplacian matrix is computed as
\begin{equation}
\label{eq:laplacian}
Lap_{(i,j)}(G[V_b])=
\begin{cases}
        deg(v_i) & \text{if } i=j \\
        -1 & \text{if } i\neq j \text{, } v_i \text{ is adjacent to } v_j\ . \\
        0  & \text{otherwise}
\end{cases}
\end{equation}

Given above two metrics, we define page compactness as

\begin{equation}
\label{eq:cohesiveness}
\gamma(G[V_b]) = \frac{\lambda_2(G[V_b])}{\textit{\text{diam}}(G[V_b])}\quad .
\end{equation}

Note that, the smaller (larger) the diameter (algebraic connectivity), the greater the page compactness. Table \ref{tab:compactness} provides the page compactness of the original SSD layout and isomorphic mapped SSD layout of the Vamana built on three datasets using the same $R=32$. In DiskANN, $R$ is the largest out-degree of Vamana used to control the index construction. Since DiskANN assigns data blocks on SSD in a round-robin, vertices in an SSD page are almost disconnected in Vamana. As a result, many SSD pages' algebraic connectivity is close to zero, resulting in a high probability that the page compactness tends to zero. For ours, we assign a vertex's $b-1$ nearest neighbors to the same page, so that the induced graph of them is a typical star-derived graph. Given this premise, we prove that the page compactness of ours must $>0.5$ in Theorem \ref{th:compactness}.

\begin{definition}
\label{def:star}
\textbf{Star-derived Graph}. Given a graph $G=(V,E)$ with one central vertex $v\in V$ and other $|V|-1$ peripheral vertices $V\setminus v$. We call $G$ a star graph if all peripheral vertices have edges to $v$ but no edges among themselves. Given another graph $G'=(V,E')$ with the same vertices as $G$ and $E'\supset E$, $G'$ is a star-derived graph, i.e., derived from a star graph $G$.
\end{definition}

\begin{theorem}
\label{th:compactness}
Given an SSD page of the isomorphic mapped layout of Vamana, its page compactness must $>0.5$.
\end{theorem}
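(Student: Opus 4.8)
The plan is to reduce Theorem~\ref{th:compactness} to a purely graph-theoretic statement about star-derived graphs, and then bound the two ingredients of the page compactness separately: an upper bound on the diameter and a lower bound on the algebraic connectivity. Fix a page $P$ of the refined layout and let $H := G[V_b]$ be the induced subgraph on its $b$ vertices. By the packing stage (Algorithm~\ref{algorithm:packing}, lines 5--13) the page is seeded by some vertex $v$ together with (at most) its $b-1$ nearest neighbors drawn from $N(v)$; hence each of those vertices is adjacent to $v$ in $G$, so $H$ contains a spanning star $K_{1,b-1}$ centered at $v$ and is therefore a star-derived graph in the sense of Definition~\ref{def:star}. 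The rest of the argument uses only this structural fact.

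First I would bound the diameter. In $K_{1,b-1}$ any two leaves are joined by the length-$2$ path through $v$, and a leaf is at distance $1$ from $v$, so $\mathrm{diam}(K_{1,b-1})\le 2$; since $H$ is obtained from this star by \emph{adding} edges and adding edges never lengthens a shortest path, $\mathrm{diam}(H)\le 2$ as well (with $\mathrm{diam}(H)=1$ exactly when $H$ is complete). Next I would bound $\lambda_2(H)$ from below. The Laplacian spectrum of the star $K_{1,m}$ with $m=b-1$ leaves is $\{\,0,\ \underbrace{1,\dots,1}_{m-1},\ m+1\,\}$, which one checks directly from Eq.~\ref{eq:laplacian}: the eigenvalue $1$ is carried by the zero-sum vectors supported on the leaves, and $m+1$ by $(m,-1,\dots,-1)$. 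Hence $\lambda_2(K_{1,b-1})=1$ for $b\ge 3$, while $\lambda_2(K_{1,1})=2$. Then I would invoke the edge-monotonicity of algebraic connectivity: if $H'$ arises from $H''$ by adding edges, their Laplacians differ by a sum of single-edge Laplacians, each positive semidefinite, so $x^{\top} Lap(H') x\ge x^{\top} Lap(H'') x$ for all $x$, and by the Courant--Fischer characterisation in Eq.~\ref{eq:algebraic} we get $\lambda_2(H')\ge\lambda_2(H'')$. Applying this with $H''=K_{1,b-1}\subseteq H=H'$ yields $\lambda_2(H)\ge 1$ when $b\ge 3$.

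Combining the two bounds finishes the estimate: for $b=2$ the page is simply $K_2$, so $\gamma(H)=2/1=2$, and for $b\ge 3$ we obtain $\gamma(H)=\lambda_2(H)/\mathrm{diam}(H)\ge 1/2$. The delicate part — and the step I expect to be the main obstacle — is upgrading this to the \emph{strict} inequality $>0.5$. The bound $\gamma(H)\ge \tfrac12$ is attained exactly at a pure star, and even a star plus a matching on its leaves still has $\lambda_2=1$ and diameter $2$, so strictness requires a non-degeneracy guarantee that $H$ is denser than $K_{1,b-1}$ in a way that actually moves the ratio (for instance, for $b=3$ it suffices that some pair of $v$'s two nearest neighbors be adjacent, forcing $H=K_3$ and $\gamma(H)=3$). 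I would close this gap either by arguing, from Vamana's $\alpha$-pruning/reachability properties and the fact that the $b-1$ retained vertices form a tight cluster around $v$, that the induced subgraph never falls into these low-connectivity star-derived patterns, or — if no clean structural argument is available — by stating the guarantee as $\gamma(H)\ge \tfrac12$. A secondary subtlety to handle is that a merged page (Algorithm~\ref{algorithm:merging}) may carry two partial stars not adjacent to one another; for such pages the connectivity of $H$, and hence the bound, must be recovered from the connectivity of the underlying Vamana graph rather than from the local star structure.
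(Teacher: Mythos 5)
Your approach is the same as the paper's: bound $\textit{\text{diam}}(G[V_b])\le 2$ from the spanning star and bound $\lambda_2(G[V_b])$ from below via edge-monotonicity of the Laplacian quadratic form. Where you diverge is precisely at the step you flag as the main obstacle, and your skepticism is warranted. The paper's proof simply asserts that because a star-derived graph ``allows connections among peripheral vertices,'' it ``would certainly increase the algebraic connectivity (i.e., $\lambda_2(\cdot)>1$),'' and concludes $\gamma>0.5$ in one line. That assertion is false as a general statement about star-derived graphs: as you observe, a star $K_{1,m}$ plus a single leaf--leaf edge (or a matching on the leaves) still has $\lambda_2=1$, since a zero-sum vector supported on the untouched leaves remains an eigenvector with eigenvalue $1$; edge addition gives only $\lambda_2\ge 1$, not strict increase. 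So the paper's proof establishes $\gamma\ge 0.5$ at best, exactly the weaker guarantee you propose stating, and the strict inequality in the theorem is not actually justified by the printed argument. Your second concern --- that a merged page (Algorithm~\ref{algorithm:merging}) can carry two mutually non-adjacent partial stars, making the induced subgraph disconnected with $\lambda_2=0$ --- is also real and is not addressed anywhere in the paper; the theorem implicitly assumes every page is a single star-derived graph, which holds only for pages that survive the packing stage unmerged. In short, your proposal is not missing an idea the paper supplies; rather, you have correctly located the two places where the paper's own one-paragraph proof overclaims, and your fallback of weakening the conclusion to $\gamma\ge\tfrac12$ (restricted to unmerged pages, or supplemented by a connectivity assumption for merged ones) is the honest version of the result.
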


\begin{proof}
\label{pf:compactness}
A star graph's diameter is fixed to 2 and its algebraic connectivity is constantly at 1. Since a star-derived graph allows connections among peripheral vertices, it may decrease the diameter (i.e.,  a diameter $\textit{\text{diam}}(\cdot)\leq 2$) and it would certainly increase the algebraic connectivity (i.e., $\lambda_2(\cdot)>1$). As a result, the page compactness must $>0.5$. 
\end{proof}

\begin{table}[t]
    \caption{Page compactness of two SSD layouts (original and isomorphic) for Vamana built on different datasets.}
    \label{tab:compactness}
    \begin{adjustbox}{width=\linewidth}
    \renewcommand{\arraystretch}{1.2}
    \begin{tabular}{c||c|c|c}
    \textbf{SSD layout $\downarrow$} & \textbf{sift100M (R32)} & \textbf{deep100M (R32)} & \textbf{turing100M (R32)} \\ \hline \hline
    original layout          & 0.000004       & 0              & 0                \\ \hline
    isomorphic mapped layout        & 0.658033       & 0.560141       & 0.547292         \\ 
    \end{tabular}
    \vspace{0.15cm}
    \end{adjustbox}
\end{table}

\vspace{-0.2cm}
\subsection{Complexity Analysis}
\label{sec:complexity_mapping}
\noindent\textbf{Complexity}. The complexity arises from both packing and merging. Assuming that basic arithmetic operations can be performed in $O(1)$ time, the complexity of packing is $O(|V|\cdot R\cdot d)$, where $V$ is the node set of a graph $G$, $R$ is the largest out-degree, and $d$ is the dimensions. This is because we need to calculate the PQ distances between each $v\in V$ and its $R$ neighbors and select the nearest $b-1$ neighbors to form an SSD page. The complexity of merging is evidently $O(b\cdot N)$. So, the overall time complexity of the mapping is $O(|V|\cdot R\cdot d + b\cdot N)$.

\begin{figure}
	\vspace{-0.2cm}
    \centering
    \includegraphics[scale=0.34]{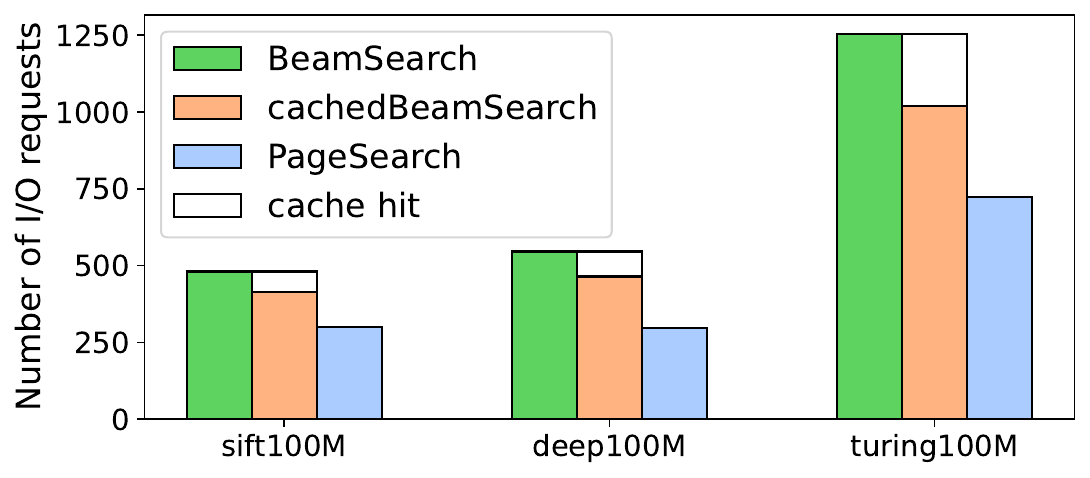}
    \vspace{-0.2cm}
    \caption{The number of I/O requests (SSD I/O and cache I/O) of beamsearch, cachedBeamsearch, and pagesearch.}
    \label{fig:cachehits}
\end{figure}

\section{Page-based search}
\label{sec:page_based_search}
We next discuss how to use the refined SSD layout of isomorphic mapped Vamana to accelerate the beamsearch. Straightforwardly, we can cache all the read SSD pages and check them before requesting SSD pages to avoid redundant SSD I/Os. It's worth mentioning that this would not reduce the total number of I/O requests but only replace a part of SSD I/O requests with cache I/O requests. The greater the cache hit rate, the larger the QPS is achieved. We implemented this method called cachedBeamsearch and compare it with beamsearch. Figure \ref{fig:cachehits} shows that cachedBeamsearch has the same number of I/O requests as beamsearch, of which only 10\%-20\% I/O requests are hit in cache and most of the cached SSD pages are unused for the node expansion of beamsearch. Moreover, the CPU remained largely underutilized during the search process due to the passive nature of cache requests. In order to take the advantages of refined SSD layout, \textit{we propose an active filtering-based asynchronous page expansion as a complement to node expansion, thus forming a new pagesearch.}

Figure \ref{fig:async} illustrates the pipelines of our pagesearch and beamsearch. Pagesearch relies on a meticulously customized page cache pool called page heap (\textbf{\S \ref{sec:pageheap}}). It involves four basic operators: \textit{Cache()}, \textit{Update()}, \textit{Check2ret()}, and \textit{Pop()} (the green components in Figure \ref{fig:async}(b)), based on which we design a page expansion strategy to actively filter more useful vertices as candidates for node expansion. Besides, we leverage the CPU's stall cycle (shown in Figure \ref{fig:async}(a)) to perform the proposed page expansion asynchronously, when the SSD I/O requests are processing at the same time. In this way, we improve the CPU utilization so as to improve overall QPS. Figure \ref{fig:cachehits} shows that pagesearch (right bar) achieves nearly 50\% reduction of SSD I/O requests compared with beamsearch.

\subsection{Page Heap}
\label{sec:pageheap}
PageHeap is a page cache pool with four basic operators.
\begin{itemize}
    \item \textbf{Cache()}. It caches a 4k-aligned page into the memory pool and register vertices of a page into a circular queue.  
    \item \textbf{Update()}. It first calculates the full vector distance between each vertex in a circular queue and a query. Then, it updates a min-heap with these vertices using the above distances and remove them from circular queue. This min-heap would be used in the page expansion (see  \S \ref{sec:pagesearch}.
    \item \textbf{Check2ret()}. It checks whether a given page’s data block exists in the memory pool. If the data block is found, it is returned as the output, otherwise it reports none.  
    \item \textbf{Pop()}. It popups the top-1 vertex having the minimum distance to the given query, from the min-heap.
\end{itemize}

\subsection{Pagesearch}
\label{sec:pagesearch}
\begin{figure}
	\vspace{-0.3cm}
    \centering
    \includegraphics[width=\linewidth]{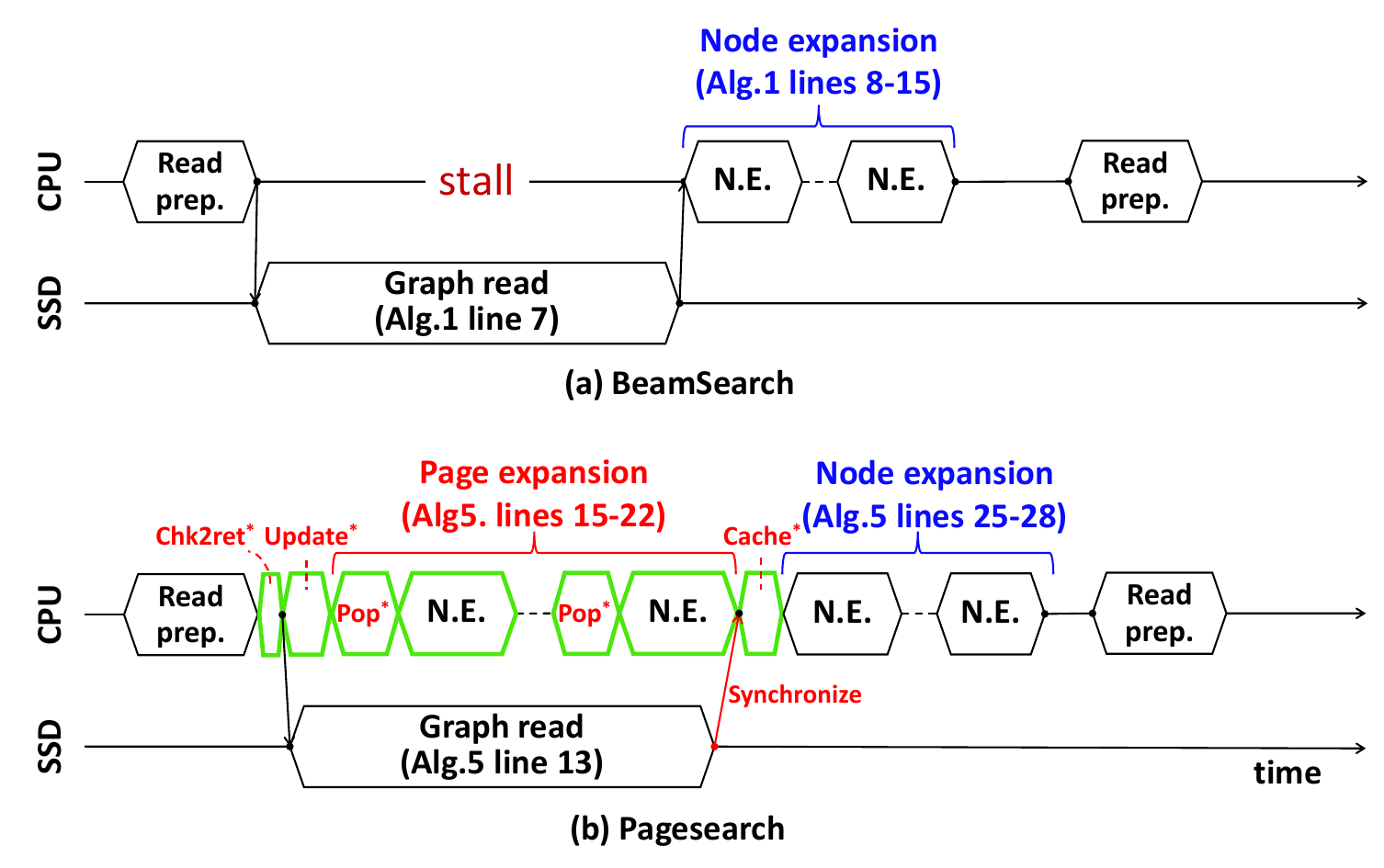}
    \vspace{-0.6cm}
    \caption{Pipeline of original beamsearch and our pagesearch.}
    \label{fig:async}
\end{figure}

\begin{algorithm}[t]
\setstretch{0.95}
\small
    \caption{Pagesearch($G$, $\vec{x}_{q}$, $v_{e}$, $B$, $L_s$, $k$)}
    \label{alg:pagesearch}
    \LinesNumbered
    \SetNoFillComment
    \KwIn{${G}$, $\vec{x}_{q}$, $v_{e}$, ${B}$, $L_s$, $k$ of top-k}
    \KwOut{approximate nearest top-$k$ neighbors $\mathcal{R}$ to $\vec{x}_q$}
    $\mathcal{C} \leftarrow \{v_e\}$ \tcc*[r]{{\scriptsize candidate set initialized as $\{v_e\}$}}
    $\mathcal{R} \leftarrow \emptyset$ \tcc*[r]{{\scriptsize top-$k$ results initialized as $\emptyset$}}
    $\mathcal{H} \leftarrow \emptyset$ \tcc*[r]{{\scriptsize PageHeap initialized as $\emptyset$}}
    \Do{$\mathcal{F}\neq\emptyset$}{
        $\mathcal{F} \leftarrow$ top-$B$ unvisited vertices from $C$ for expansion \;
        \tcp{\footnotesize prepare for SSD I/O requests: lines 6-12}
        $\mathcal{P} \leftarrow \emptyset$ \tcc*[r]{{\scriptsize page placeholders initialized as $\emptyset$}}
        \For{$v_i \in \mathcal{F}$}{
            \If{$\mathcal{H}\textnormal{.check2ret}(v_i)\textnormal{ is none}$}{
                $P_j \leftarrow$ register read for page containing $b_{v_i}$ \;
                $\mathcal{P} \leftarrow \mathcal{P} \cup P_j$ \;
            }
        }
        \tcp{\footnotesize async SSD I/O requests: line 13}
        async read all pages in $\mathcal{P}$ from SSD \;
        $\mathcal{H}$.update() \;
        \tcp{\footnotesize async page expansion: lines 15-22}
        \While{$b_{v_i}=\langle \vec{x}_{v_i},N(v_i) \rangle \leftarrow$ $\mathcal{H}$.{\rm pop()}\textnormal{ is not none}}{
            \If{$v_i$\textnormal{ is unvisited}}{
                NeighborExpansion($\vec{x}_{q}$, $b_{v_i},\mathcal{C},\mathcal{R},L_s,k$) \;
                \If{\textnormal{async read done}}{
                    break \;
                }
            }
        }
        wait for async read done \;
        $\mathcal{H}$.cache($\mathcal{P}$) \;
        \tcp{\footnotesize node expansion: lines 25-28}
        \For{\textnormal{unvisited }$v_i \in \mathcal{F}$}{
            $b_{v_i}=\langle \vec{x}_{v_i},N(v_i) \rangle \leftarrow$ $\mathcal{H}$.check2ret($v_i$) \;
            NeighborExpansion($\vec{x}_{q}$, $b_{v_i},\mathcal{C},\mathcal{R},L_s,k$) \;
        }
        return $\mathcal{R}$
    }
\end{algorithm}



Given a graph index $G=(V,E)$ of the vector dataset $\mathcal{X}$, a query vector $\vec{x}_q$, beam size $B$, search width $L_s$, and $k$ of top-$k$ results, Algorithm \ref{alg:pagesearch} shows the entire procedure. 

\vspace{0.1cm}
\noindent\underline{Initialization.} We initialize the candidate set $\mathcal{C}$ as $\{v_e\}$, the top-$k$ results $\mathcal{R}=\emptyset$, and the page heap $\mathcal{H}=\emptyset$ (lines 1-3). 

\vspace{0.1cm}
\noindent\underline{Read Preparing.} We prepare SSD I/O requests for unvisited top-$B$ vertices from $\mathcal{C}$, denoted by $\mathcal{F}$. (1) For each $v_i\in \mathcal{F}$, we invoke check2ret() to check if $v_i$ is cached and register a page placeholder $P_j$ into $\mathcal{P}$ (lines 6-12). (2) We submit the prepared read requests asynchronously (line 13). 

\vspace{0.1cm}
\noindent\underline{Page Expansion.} During asynchronous reading, we Update() page heap (line 14) and begin page expansion. We iteratively Pop() one vertex from page heap and invoke NeighborExpansion (Algorithm \ref{alg:neighbor_expand}) to update both $\mathcal{C}$ and $\mathcal{R}$ with more promising candidates (lines 15-22). We terminate the page expansion when asynchronous SSD read completes (lines 18-20), making it synchronized with the asynchronous read.

\vspace{0.1cm}
\noindent\underline{Node Expansion.} When page expansion finishes, we execute the same node expansion as beamsearch: taking unvisited node's data block $b_{v_i}$ from the memory pool of page heap by invoking check2ret(), for further NeighborExpansion. 

Pagesearch repeats the above read preparing, page expansion, and node expansion until no more vertices are visited (line 30) and returns the top-$k$ results $\mathcal{R}$ (line 29).

\section{Evaluation}
\label{sec:evaluation}
\begin{table}[t]
\vspace{-0.1cm}
\setlength{\abovecaptionskip}{0.1cm}
\setlength{\belowcaptionskip}{-0.3cm}
\centering
\small
    \caption{Statistics of datasets}
    \label{tab:datasets}
    \scalebox{0.85}{
    \begin{adjustbox}{width=\linewidth}
    \renewcommand{\arraystretch}{1.2}
    \begin{tabular}{c||c|c|c|c}
        \textbf{Dataset}               & \textbf{Dimension}           & \textbf{LID \cite{Facco2017}}                   & \textbf{Query}                & \textbf{Base (M:$10^6$, B:$10^9$)}    \\ \hline \hline
        image \footnote[1]                 & 100                 & 15.3                    & 10,000                  & 100M   \\ \hline
        sift \cite{texmex}                  & 128                 & 16.6                   & 10,000                  & 100M    \\ \hline
        deep \cite{Artem2016}				  &  96					& 17.6					& 10,000			& 10M$\sim$1B	\\ \hline
        msong \cite{Msong}                 & 420                 & 18.0                   & 200                  & 0.99M  \\ \hline
        crawl \cite{Crawl}                 & 300                 & 27.4                  & 10,000                  & 1.99M \\ \hline
        turing \cite{turing}                & 100                 & 30.5                    & 100,000                 & 100M   \\ \hline
        glove-100 \cite{Glove}             & 100                 & 34.3                  & 10,000                  & 1.18M \\ \hline
        gist \cite{texmex}                  & 960                 & 35.0                  & 1,000                   & 1M      
    \end{tabular}
    \end{adjustbox}
    }
\end{table}

We evaluate our DiskANN++, DiskANN, and other competitors on eight large-scale datasets, to answer five questions:

\noindent\textbf{Q1:} Does DiskANN++ achieve a better QPS vs. recall@$k$ than others within a low memory footprint (\textbf{\S \ref{sec:effect_efficiency}})?

\noindent\textbf{Q2:} Does DiskANN++ consistently outperform DiskANN under various hardware resource constraints (\textbf{\S \ref{sec:hardware}})?

\noindent\textbf{Q3:} How's the scalability of DiskANN++ w.r.t. different datasets (with various hardness) and data scales (\textbf{\S \ref{sec:scalability}})?

\noindent\textbf{Q4:} What's the parameter sensitivity of DiskANN++ (\textbf{\S \ref{sec:parameter}})?

\noindent\textbf{Q5:} What is the contribution of each component individually or in combination to DiskANN++ (\textbf{\S \ref{sec:ablation}}).


\subsection{Experiment Settings}
\noindent\textbf{Datasets.} Table \ref{tab:datasets} shows statistics of eight datasets, including seven public datasets (msong \cite{Msong}, glove-100 \cite{Glove}, crawl \cite{Crawl}, gist \cite{texmex}, sift \cite{texmex}, deep \cite{Artem2016}, and turing \cite{turing}) and one commercial dataset (image\footnote{\noindent Commercial image dataset provided by Huawei Technologies Co., Ltd.}). The slices of deep ranging from 1M to 1B were employed to evaluate the stability at scales.

\vspace{0.1cm}
\noindent\textbf{Comparing Algorithms.} We compared with three SSD-based solutions: (1) SPANN \cite{Chen2021}. (2) BBANN \cite{Simhadri2022}. (3) DiskANN \cite{JayaramSubramanya2019}. For ours, we implemented three versions: (4) DiskANN++ (w/o compression), (5) DiskANN++ (sq16), and (6) DiskANN++ (sq8). Here, sq16 and sq8 are compression ratio that means we compress a vector from fp32 to int16 or int8. The larger the number, the smaller the compression ratio. We study the effect of compression on DiskANN++ in \S \ref{sec:effect_efficiency}.

\begin{figure*}[t]
	\vspace{-0.5cm}
    \centering
    \includegraphics[scale=0.37]{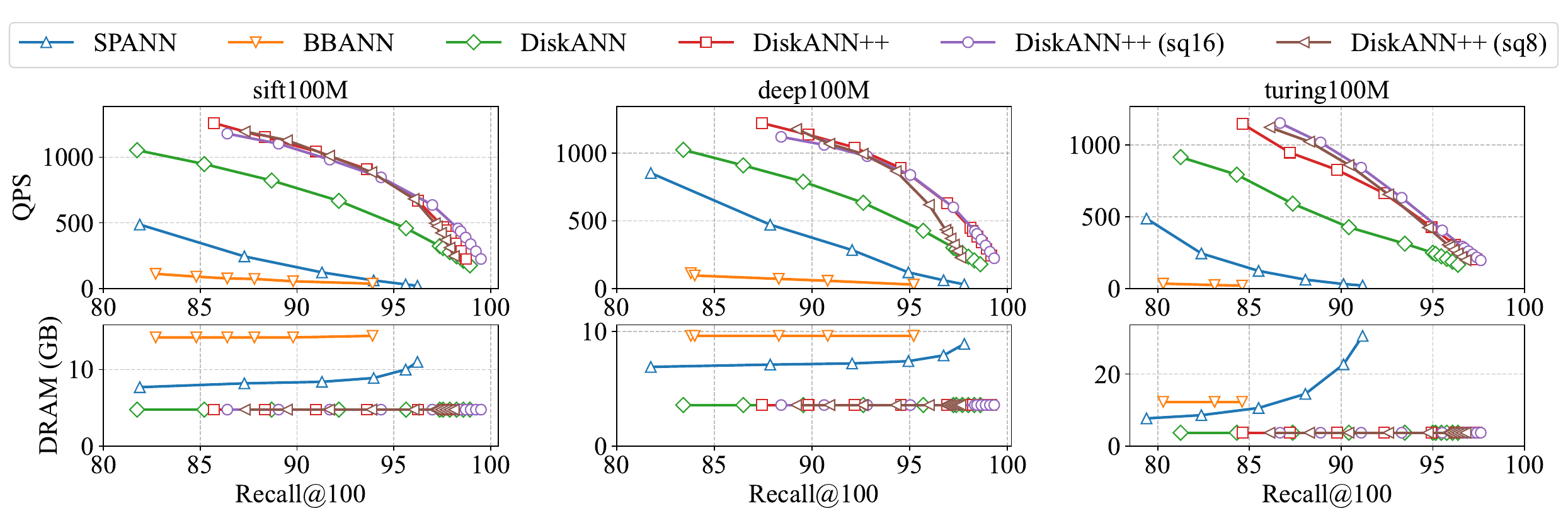}
    \vspace{-0.1cm}
    \caption{QPS and DRAM usage vs. Recall@100 for SPANN, BBANN, DiskANN, and DiskANN++ (with sq16/sq8 compression)}
    \label{fig:recall@100_vs_qps}
    \vspace{-0.4cm}
\end{figure*}

\begin{figure*}[t]
    \centering
    \includegraphics[scale=0.37]{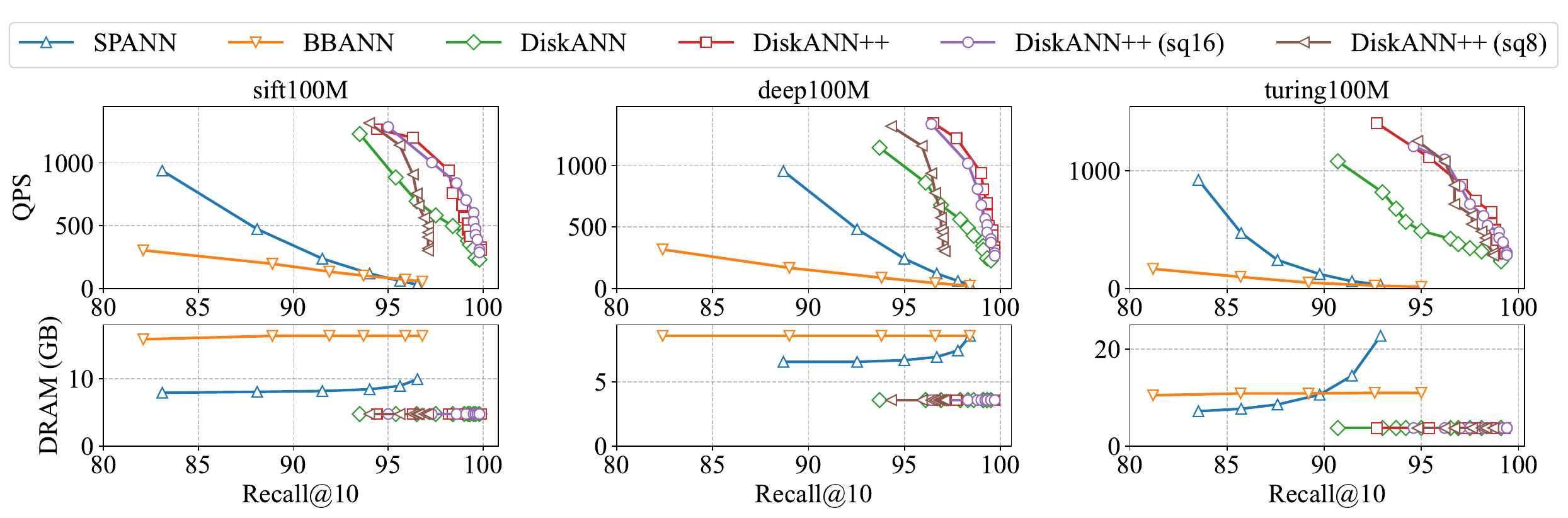}
    \vspace{-0.1cm}
    \caption{QPS and DRAM usage vs. Recall@10 for SPANN, BBANN, DiskANN, and DiskANN++ (with sq16/sq8 compression)}
    \label{fig:recall@10_vs_qps}
    \vspace{-0.4cm}
\end{figure*}

\begin{figure*}[t]
    \centering
    \includegraphics[scale=0.37]{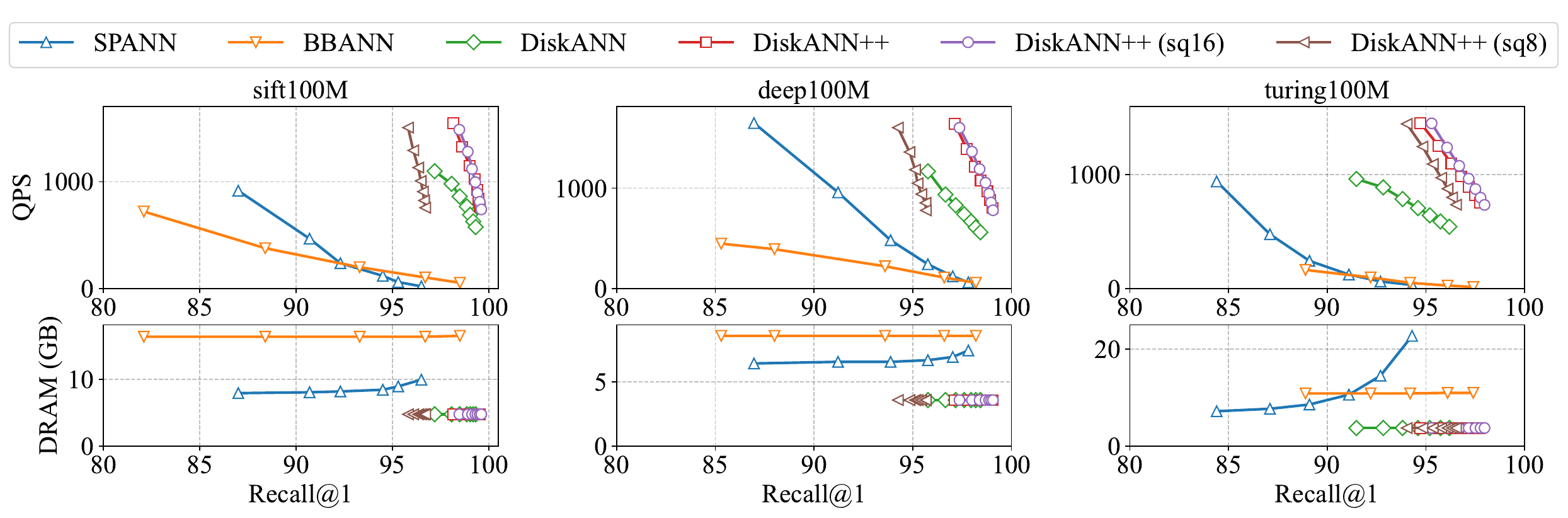}
    \vspace{-0.1cm}
    \caption{QPS and DRAM usage vs. Recall@1 for SPANN, BBANN, DiskANN, and DiskANN++ (with sq16/sq8 compression)}
    \label{fig:recall@1_vs_qps}
    \vspace{-0.4cm}
\end{figure*}

\vspace{0.1cm}
\noindent\textbf{Resource Constraints.} We default to employing 10\% of the dataset size as the memory constraint and use 8 threads for search (one thread per query). This configuration enables full usage of I/O resources when disk bandwidth and IOPS are not restricted. We also conducted experiments under varying hardware conditions (\S \ref{sec:hardware}) via Docker, including 1 to 8 threads (i.e., running 1 to 8 queries concurrently), 4k random I/O bandwidth (100-700MB/s), and 1-25\% of the dataset size as memory constraints.

\vspace{0.1cm}
\noindent\textbf{Evaluation Metrics.} We evaluate ANNS's effectiveness by \textit{recall@$k$} ($k=$ 1,10,100). We evaluate ANNS's efficiency by \textit{QPS} and \textit{speedup}. QPS exhibits an inverse correlation with each query's runtime. Speedup is defined as the ratio of QPS achieved by DiskANN++ and DiskANN at the same recall@$k$.

\vspace{0.1cm}
\noindent\textbf{Implementation Setup.} For SPANN, BBANN, and DiskANN, we use public implementations provided in their GitHub repositories \cite{SPANNrepo,BBANNrepo,DiskANNrepo}. All algorithms were compiled in C++17, retaining relevant SIMD and prefetching instructions. The benchmark scripts were implemented using Bash and Python3. All experiments were conducted on an Ubuntu 20.04 server with CPU-10C20T 3.70GHz, SSD-PM981 2TB, DRAM-32GB 2133Mb/s DDR4 x4.

\begin{figure*}
    \centering
    \includegraphics[width=0.95\linewidth]{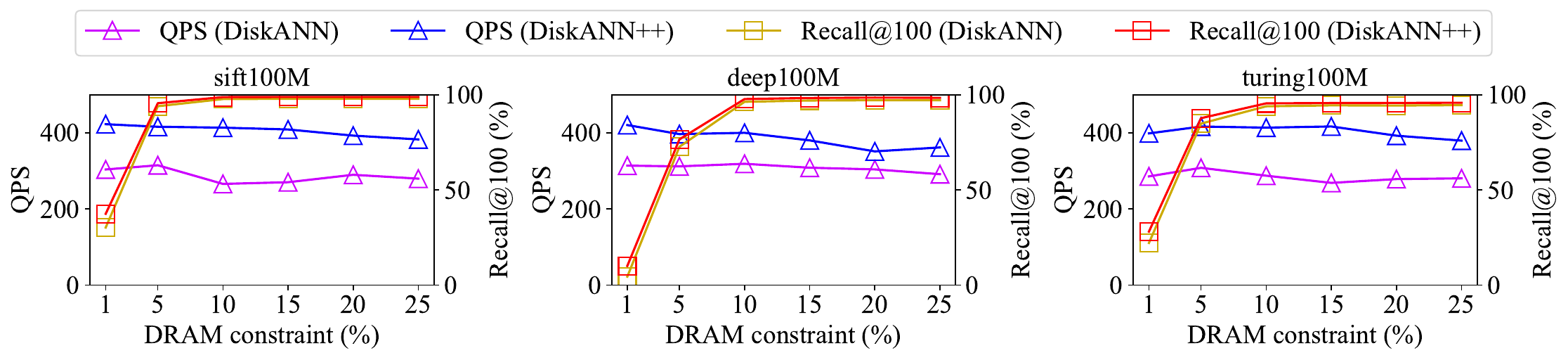}
    \vspace{-0.1cm}
    \caption{Effect of memory constraint on QPS and recall@100 of DiskANN and DiskANN++.}
    \label{fig:PQquality_res}
    \vspace{-0.6cm}
\end{figure*}

\noindent\textbf{Parameter Settings.} In both DiskANN and DiskANN++, we utilize identical graph construction parameters: R=32, L=500, $\alpha_1$=1.0, $\alpha_2$=1.2, and the same PQ construction method: 8-bit encoding (i.e., 256 pivots per chunk). The DRAM limit of index construction is set to 64GB, while in the index optimizing and searching stage, the DRAM limit are 10\% of the dataset size. Under such memory constraints, the construction of the index for 1 billion-sized slices is conducted in 10 slices and then merged, following the method outlined in the original DiskANN paper. In BBANN, we adopt index construction parameters based on the recommended configurations in its GitHub repository, with slight modifications: HNSW max layers is set to 32, candidate set list's size in construction is set to 512, and block size is 4096KB. In SPANN, we directly adopt the recommended parameter configuration it provides.

\subsection{Effectiveness and Efficiency Evaluation}
\label{sec:effect_efficiency}
We set the same memory constraint (10\% of the dataset size) for DiskANN and DiskANN++. For SPANN and BBANN, since they require at least 20\% of the dataset size in memory to gain an acceptable recall ($>$85\%) with recommended parameters, we set the memory constraint for them as 20\%.

\vspace{0.1cm}
\noindent\textbf{QPS vs. recall@$k$.} Figure \ref{fig:recall@100_vs_qps}-\ref{fig:recall@1_vs_qps} (top) demonstrate that DiskANN family of algorithms strikes a better balance between QPS and recall than that of SPANN and BBANN. DiskANN++ achieves a 50\%-100\% improvement trend under the same recall$@k$. For example, given the same recall@100 as 97\% in deep100M, the QPS of DiskANN++ and DiskANN are 605.3 and 310.2, leading a 95\% improvement on QPS. The same improvement hold for recall@1 and recall@10 as well. In compression scenarios, DiskANN++ (sq16) shows the similar trends as that without compression. Besides, the sq16 compression slightly improves the QPS given the same recall$@k$. This is because it reduces the word length of node data by compressing vector with less precision loss, which allows for accommodating more nodes in each SSD page. Consequently, it increases the page expansion width (i.e., $b$) in pagesearch to a certain extent, thus accelerating the convergence speed of  searching. On the other hand, we found that a noticeable drop occurs at high recall for sq8. This is because the significant precision loss of original vector when a strong compression is conducted.

\vspace{0.1cm}
\noindent\textbf{Memory Usage.} As shown in Figure \ref{fig:recall@1_vs_qps}-\ref{fig:recall@100_vs_qps} (bottom), SPANN and BBANN consume more memory than DiskANN family of algorithms and the memory usage increases as the precision increases. The reason is two-folded: (1) SPANN and BBANN inevitably require reading more disk clusters to achieve higher precision, resulting in ever-increasing memory consumption, and (2) SPANN and BBANN maintain SPT or graph index in memory, which inherently require more memory. In contrast, the DiskANN family of algorithms maintain a stable memory footprint that is lower than others because the PQ is used to obtain low-dimensional quantized vectors resident in memory.

\subsection{Effect of Hardware Resources}
\label{sec:hardware}

\begin{figure}[t]
	\vspace{-0.1cm}
    \centering
    \includegraphics[width=0.93\linewidth]{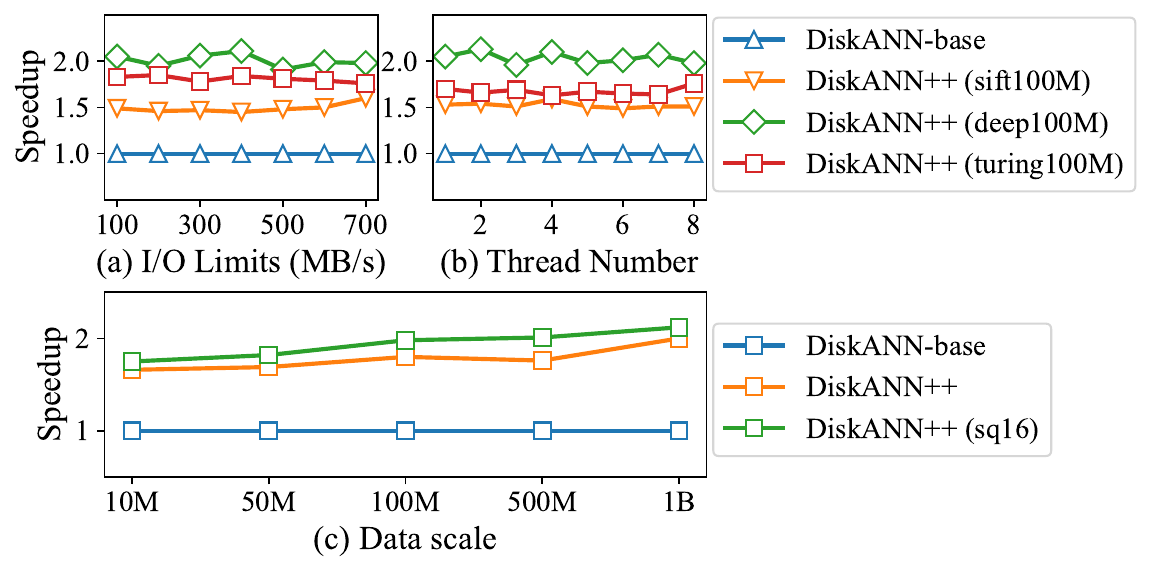}
    \vspace{-0.2cm}
    \caption{Speedup of DiskANN++ of different (a) I/O bandwidth, (b) \# thread, and (c) dataset scale (deep: 10M$\sim$1B).}
    \label{fig:stability_res}
\end{figure}

\noindent\textbf{Effect of I/O limits and \# threads.}  We first study the effect of hardware resources on DiskANN++ at a high recall@100 of 97\% on deep100M, sift100M, and turing100M, under varying I/O bandwidth (Figure \ref{fig:stability_res}(a)) and \# threads (Figure \ref{fig:stability_res}(b)). From Figure \ref{fig:stability_res}(a)-(b), we found that DiskANN++ achieves consistent speedup across varying I/O limits and \# threads (one thread per query) on the same dataset. The differences of speedup across datasets is attributed to the impact of different vector dimensions and local intrinsic dimensionality (LID) on search lengths, as discussed in \cite{Li2019}. 

\vspace{0.1cm}
\noindent\textbf{Effect of DRAM usage constraints.} We study the impact of memory constraints on the QPS and recall@$k$. We provide results with fixed search parameters under memory constraints at 1\%-25\% of the dataset size, e.g., the total size of sift100M is 48GB, deep100M is 36GB, and turing is 37GB. As shown in Figure \ref{fig:PQquality_res}, DiskANN++ consistently achieves higher recall than DiskANN under various memory constraints. Additionally, it demonstrates higher search speed (note the QPS curves). We analyze that, within the DiskANN framework, the primary impact of memory constraints lies in the compression ratio of the PQ in memory, reflecting the loss of precision in the original vectors. The larger the memory constraint, the less the PQ precision's loss. This directly influences the direction selection at each hop during search. Lower PQ quality can result in longer routing search paths. Worse still, it may lead to the search being unable to correctly reach the query's neighbors, thereby failing to meet the desired recall requirements. From the results shown in Figure \ref{fig:PQquality_res}, we can conclude that the recall increases as memory constraint increases.

\subsection{Scalability Evaluation}
\label{sec:scalability}

\begin{table}[t]
\vspace{-0.1cm}
\setlength{\abovecaptionskip}{0.1cm}
\setlength{\belowcaptionskip}{0.1cm}
\centering
\small
    \caption{QPS and speedup results of different datasets}
    \label{tab:cross_datasets}
    \scalebox{0.85}{
    \begin{adjustbox}{width=\linewidth}
    \renewcommand{\arraystretch}{1.2}
    \begin{tabular}{cc||c|cc|cc}
        \multicolumn{2}{c||}{\textbf{Datasets $\downarrow$}}                  & \textbf{DiskANN} & \multicolumn{2}{c|}{\textbf{DiskANN++}}         & \multicolumn{2}{c}{\textbf{DiskANN++ (sq16)}}   \\ \hline \hline
        \multicolumn{1}{c|}{dataset}    & LID  & QPS     & \multicolumn{1}{c|}{QPS}     & speedup & \multicolumn{1}{c|}{QPS}     & speedup \\ \hline
        \multicolumn{1}{c|}{image}      & 15.3   & 77.53   & \multicolumn{1}{c|}{117.07}  & 1.51x   & \multicolumn{1}{c|}{143.57}  & 1.85x  \\ \hline
        \multicolumn{1}{c|}{sift100M}   & 16.6  & 354.1   & \multicolumn{1}{c|}{523.32}  & 1.48x   & \multicolumn{1}{c|}{599.18}  & 1.69x   \\ \hline
        \multicolumn{1}{c|}{deep100M}   & 17.6 & 310.24  & \multicolumn{1}{c|}{605.31}  & 1.95x   & \multicolumn{1}{c|}{617.77}  & 1.99x   \\ \hline
        \multicolumn{1}{c|}{msong}      & 18.0  & 435.99  & \multicolumn{1}{c|}{653.35}  & 1.50x   & \multicolumn{1}{c|}{829.48}  & 1.90x   \\ \hline
        \multicolumn{1}{c|}{crawl}      & 27.4 & 738.2   & \multicolumn{1}{c|}{1209.03} & 1.64x   & \multicolumn{1}{c|}{1318.63} & 1.79x   \\ \hline
        \multicolumn{1}{c|}{turing100M} & 30.5   & 110.70  & \multicolumn{1}{c|}{270.65}  & 2.44x   & \multicolumn{1}{c|}{289.77}  & 2.61x   \\ \hline
        \multicolumn{1}{c|}{glove-100}  & 34.3 & 90.66   & \multicolumn{1}{c|}{245.49}  & 2.71x   & \multicolumn{1}{c|}{278.83}  & 3.08x   \\ \hline
        \multicolumn{1}{c|}{gist}       & 35.0 & 348.32  & \multicolumn{1}{c|}{347.7}   & 1.00x   & \multicolumn{1}{c|}{474.93}  & 1.36x   
    \end{tabular}
    \end{adjustbox}
    }
\end{table}

We evaluated the scalability of DiskANN++ across different data scales (Figure \ref{fig:stability_res}(c)) and types of datasets (Table \ref{tab:cross_datasets}) given recall@100 of 97\%. In Figure \ref{fig:stability_res}(c), the speedup is stable as the data scale increases. It is attributed to the adaptability of graph index to various data scales. Some classical studies \cite{Malkov2016,Fu2017} have indicated that the time complexity of graph search is logarithmic which grows slowly with the increase of data scale. From Table \ref{tab:cross_datasets}, ours achieves at least 1.5X speedup on most of the datasets except gist, due to its high vector dimensionality of 960, which results in only one vertex per page, limiting the effectiveness of pagesearch. However, after augmenting the node capacity to 2 with node compression of sq16, we achieved 1.36x improvement. We also noticed that our DiskANN++ is scalable to the hardness of datasets. For turing100M and glove-100 datasets that have a LID $>30$ (with a modest dimensionality of 100), we achieve at least 2.44x speedup. This is because when searching on the dataset with a larger LID, NN refine phase constitutes a larger proportion and our pagesearch can effectively reduce the I/O requests.




\subsection{Parameter Sensitivity}
\label{sec:parameter}

\begin{table}[t]
\centering
\small
    \caption{Effect of $N_{\rm cluster}$ on speedup under different I/O bandwidth (varied from 100 MB/s to 700 MB/s).}
    \label{tab:cross_ncluster}
    \scalebox{0.9}{
    \begin{adjustbox}{width=\linewidth}
    \renewcommand{\arraystretch}{1.2}
    \begin{tabular}{c||c|c|c|c|c|c|c}
        \multirow{2}{*}{$N_{\rm cluster}\downarrow$} & \multicolumn{7}{c}{\textbf{I/O bandwidth (MB/s)}}                                                                                     \\ 
    \cline{2-8}
                                                 & 100             & 200             & 300             & 400             & 500             & 600             & 700              \\ 
    \hline
    4096                                         & 1.176x          & 1.065x          & 1.085x          & 1.189x          & 1.127x          & 1.213x          & \textbf{1.221x}  \\ 
    \hline
    8192                                         & 1.156x          & 1.086x          & 1.126x          & 1.132x          & 1.175x          & 1.305x          & 1.166x           \\ 
    \hline
    16384                                        & 1.180x          & 1.113x          & 1.190x          & 1.237x          & 1.219x          & \textbf{1.334x} & 1.147x           \\ 
    \hline
    32768                                        & 1.189x          & 1.159x          & \textbf{1.225x} & \textbf{1.354x} & \textbf{1.418x} & 1.238x          & 1.077x           \\ 
    \hline
    65536                                        & \textbf{1.240x} & \textbf{1.161x} & 1.223x          & 1.275x          & 1.153x          & 1.124x          & 0.915x          
    \end{tabular}
    \end{adjustbox}
    }
\end{table}

\begin{figure}[t]
    \centering
    \includegraphics[width=0.9\linewidth]{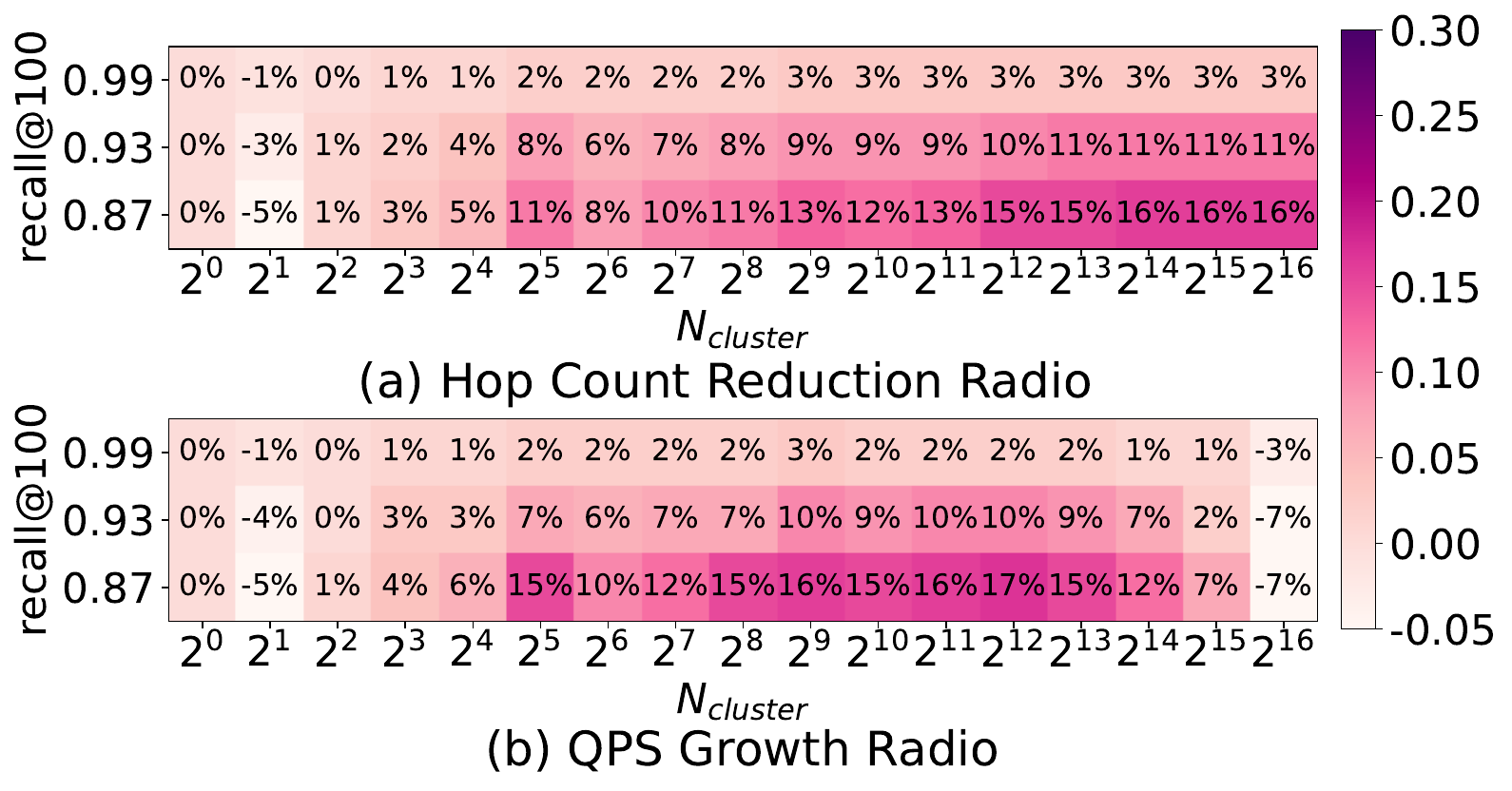}
    \vspace{-0.2cm}
    \caption{Effect of the query-sensitive entry vertex (with different $N_{\rm cluster}$) under different recall@100 (deep100M).}
    \label{fig:ncluster_vs_opt}
    \vspace{-0.1cm}
\end{figure}

\begin{figure}[t]
    \centering
    \includegraphics[width=0.92\linewidth]{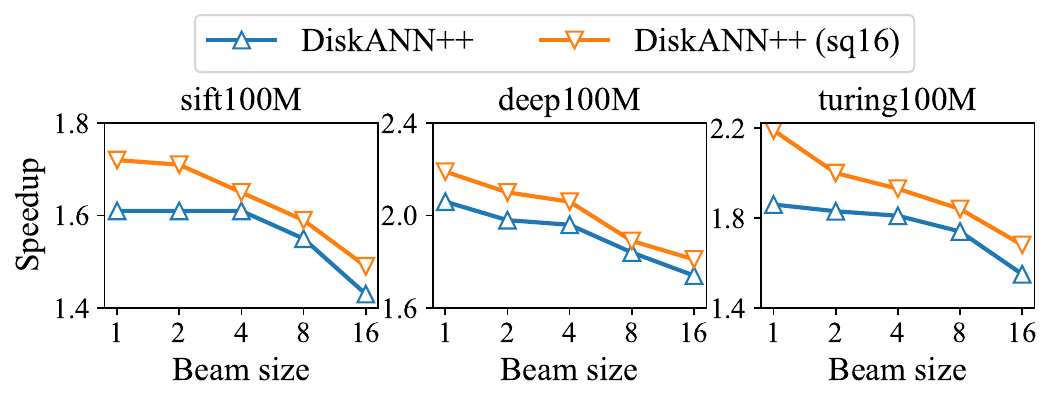}
    \vspace{-0.2cm}
    \caption{Effect of beamsize $B$ on speedup (deep100M).}
    \label{fig:beamtest}
\end{figure}

\noindent\textbf{Effect of $N_{\rm cluster}$.} The cluster size $N_{\rm cluster}$ has a significant impact on query-sensitive entry vertex selection strategy. So, we study its influence in Figure \ref{fig:ncluster_vs_opt}. We found that increasing $N_{\rm cluster}$ can effectively reduce routing hops and improve the QPS, which aligns with the conclusion in Theorem \ref{th:1} (\S \ref{sec:analysis_entry_vertex}). However, the search performance does not infinitely improve as $N_{\rm cluster}$ increases. This is because the time spent in entry vertex selection cannot be ignored. For example, using an excessively large $N_{\rm cluster}=2^{16}$ would increase the time on entry vertex selection, leading to a bad search performance. Therefore, it is necessary to carefully adjust $N_{\rm cluster}$ to achieve a favorable trade-off. Furthermore, as search precision (recall@100) increases, the QPS improvement brought by query-sensitive entry vertex selection strategy diminishes. This is because the routing procedure requires more time in NN refine step than NN approaching step to find the top-$k$ with a high recall, thus weakening the improvement brought by query-sensitive entry vertex selection. We also study the effect of $N_{\rm cluster}$ on speedup under different I/O bandwidth (Table \ref{tab:cross_ncluster}). It demonstrates that a smaller $N_{\rm cluster}$ is preferred to decrease the overhead of online entry vertex selection, when SSD's I/O bandwidth is large (or SSD I/O is fast). Otherwise, a larger $N_{\rm cluster}$ is better. Moreover, we tested the improvement in QPS under the maximum achievable bandwidth (with $N_{\rm cluster}$ = 4096). We found that query-sensitivity entry vertex strategy only consumes 0.04\% of the memory usage and 0.64\% of the runtime, but also offers a 1.12X speedup in QPS.


\vspace{0.1cm}
\noindent\textbf{Effect of beam size $B$.} In Figure \ref{fig:beamtest}, speedup decreases as the beam size increases. This is because increasing the beam size explicitly expands the width of node expansion, allowing node expansion to process more data blocks from page heap in each iteration, resulting in fewer unprocessed data blocks in page heap can be used for page expansion. However, as mentioned in the original DiskANN paper, continuously increasing the beam size is not conducive to load balancing for SSD across different queries (the original paper suggests $B=2,4,8$ for better performance). Therefore, under such parameter settings, DiskANN++ still exhibits at least 1.5X speedup.

\begin{table*}[t]
\vspace{-0.3cm}
    \centering
      \caption{Overhead and speedup (at recall@100=97\%) of randomOrder, parallelGorder, pack-merge (ours) using pagesearch.}
      \label{tab:overhead_of_reorder}
      \scalebox{0.85}{
      \begin{adjustbox}{width=\linewidth}
        \renewcommand{\arraystretch}{1.2}
        \begin{tabular}{c||ccc|ccc|ccc}
                                  & \multicolumn{3}{c|}{DRAM usage (GB)} & \multicolumn{3}{c|}{Runtime (sec)} & \multicolumn{3}{c}{Speedup with pagesearch}\\ \hline \hline
          graph index             & \multicolumn{1}{c|}{randomOrder}     & \multicolumn{1}{c|}{parallelGorder}       & pack-merge                                   & \multicolumn{1}{c|}{randomOrder} & \multicolumn{1}{c|}{parallelGorder} & pack-merge & \multicolumn{1}{c|}{randomOrder} & \multicolumn{1}{c|}{parallelGorder} & pack-merge \\ \hline
          Vamana (sift100M R32)   & \multicolumn{1}{c|}{2.2}             & \multicolumn{1}{c|}{75.1 (MLE)}           & 6.3                                           & \multicolumn{1}{c|}{449.5}       & \multicolumn{1}{c|}{2739.1}         & 222.3       & \multicolumn{1}{c|}{1.02x}       & \multicolumn{1}{c|}{1.84x}          & 1.69x       \\ \hline
          Vamana (deep100M R32)   & \multicolumn{1}{c|}{1.9}             & \multicolumn{1}{c|}{72.5 (MLE)}           & 5.1                                           & \multicolumn{1}{c|}{381.2}       & \multicolumn{1}{c|}{2651.4}         & 123.1       & \multicolumn{1}{c|}{1.08x}       & \multicolumn{1}{c|}{2.15x}          & 2.02x       \\ \hline
          Vamana (turing100M R32) & \multicolumn{1}{c|}{2.0}             & \multicolumn{1}{c|}{73.4 (MLE)}           & 6.1                                           & \multicolumn{1}{c|}{366.7}       & \multicolumn{1}{c|}{2644.8}         & 195.2       & \multicolumn{1}{c|}{1.08x}       & \multicolumn{1}{c|}{2.08x}          & 1.88x
        \end{tabular}
      \end{adjustbox}
      }
      \vspace{-0.5cm}
    \end{table*}

\begin{table}[t]
    \small
    \setstretch{1}
    \caption{Ablation analysis on query-sensitive entry vertex (A), isomorphic mapping (B), and pagesearch (C).}
    \label{tab:ablation_res}
    \begin{adjustbox}{width=\linewidth}
    \renewcommand{\arraystretch}{1.2}
    \begin{tabular}{ccccccc}
    \multicolumn{1}{c||}{Dataset}    & \multicolumn{3}{c|}{Recall@100=85\%}                                                                            & \multicolumn{3}{c}{Recall@100=97\%}                                                           \\ \hline
    \multicolumn{1}{c||}{sift100M}   & \multicolumn{3}{c|}{DiskANN++}                                                                                & \multicolumn{3}{c}{DiskANN++}                                                               \\ \hline \hline
    \multicolumn{1}{c||}{components} & \multicolumn{1}{c|}{QPS}             & \multicolumn{1}{c|}{mean I/Os} & \multicolumn{1}{c|}{mean hops} & \multicolumn{1}{c|}{QPS}            & \multicolumn{1}{c|}{mean I/Os} & mean hops     \\ \hline
    \multicolumn{1}{c||}{-}          & \multicolumn{1}{c|}{946.42 (1.00x)}  & \multicolumn{1}{c|}{181.54}    & \multicolumn{1}{c|}{24.66}            & \multicolumn{1}{c|}{385.88 (1.00x)} & \multicolumn{1}{c|}{447.49}    & 57.61                \\ \hline
    \multicolumn{1}{c||}{A}          & \multicolumn{1}{c|}{1033.64 (1.09x)} & \multicolumn{1}{c|}{164.57}    & \multicolumn{1}{c|}{22.51}            & \multicolumn{1}{c|}{401.14 (1.04x)} & \multicolumn{1}{c|}{431.58}    & 55.62                \\ \hline
    \multicolumn{1}{c||}{B}          & \multicolumn{1}{c|}{959.8 (1.01x)}   & \multicolumn{1}{c|}{181.54}    & \multicolumn{1}{c|}{24.66}            & \multicolumn{1}{c|}{387.17 (1.00x)} & \multicolumn{1}{c|}{447.49}    & 57.61                \\ \hline
    \multicolumn{1}{c||}{C}          & \multicolumn{1}{c|}{1065.09 (1.13x)} & \multicolumn{1}{c|}{179.24}    & \multicolumn{1}{c|}{24.38}            & \multicolumn{1}{c|}{440.34 (1.14x)} & \multicolumn{1}{c|}{438.87}    & 56.6                 \\ \hline
    \multicolumn{1}{c||}{AB}         & \multicolumn{1}{c|}{1022.24 (1.08x)} & \multicolumn{1}{c|}{164.57}    & \multicolumn{1}{c|}{22.51}            & \multicolumn{1}{c|}{397.41 (1.03x)} & \multicolumn{1}{c|}{431.58}    & 55.62                \\ \hline
    \multicolumn{1}{c||}{AC}         & \multicolumn{1}{c|}{1152.38 (1.22x)} & \multicolumn{1}{c|}{163.79}    & \multicolumn{1}{c|}{22.41}            & \multicolumn{1}{c|}{447.22 (1.16x)} & \multicolumn{1}{c|}{425.35}    & 54.92                \\ \hline
    \multicolumn{1}{c||}{BC}         & \multicolumn{1}{c|}{1281.32 (1.35x)} & \multicolumn{1}{c|}{148.32}    & \multicolumn{1}{c|}{20.58}            & \multicolumn{1}{c|}{660.6 (1.71x)}  & \multicolumn{1}{c|}{291.85}    & 38.25                \\ \hline
    \multicolumn{1}{c||}{ABC}        & \multicolumn{1}{c|}{1412.35 (1.49x)} & \multicolumn{1}{c|}{133.41}    & \multicolumn{1}{c|}{18.72}            & \multicolumn{1}{c|}{691.88 (1.79x)} & \multicolumn{1}{c|}{278.65}    & 36.62                \\
    \multicolumn{1}{l}{}            & \multicolumn{1}{l}{}                 & \multicolumn{1}{l}{}           & \multicolumn{1}{l}{}                  & \multicolumn{1}{l}{}                & \multicolumn{1}{l}{}           & \multicolumn{1}{l}{} \\
    \multicolumn{1}{c||}{Dataset}    & \multicolumn{3}{c|}{Recall@100=85\%}                                                                            & \multicolumn{3}{c}{Recall@100=97\%}                                                           \\ \hline
    \multicolumn{1}{c||}{deep100M}   & \multicolumn{3}{c|}{DiskANN++}                                                                                & \multicolumn{3}{c}{DiskANN++}                                                               \\ \hline \hline
    \multicolumn{1}{c||}{components} & \multicolumn{1}{c|}{QPS}             & \multicolumn{1}{c|}{mean I/Os} & \multicolumn{1}{c|}{mean hops} & \multicolumn{1}{c|}{QPS}            & \multicolumn{1}{c|}{mean I/Os} & mean hops     \\ \hline
    \multicolumn{1}{c||}{-}          & \multicolumn{1}{c|}{945.69 (1.00x)}  & \multicolumn{1}{c|}{184.15}    & \multicolumn{1}{c|}{25.06}            & \multicolumn{1}{c|}{308.78 (1.00x)} & \multicolumn{1}{c|}{557.93}    & 71.42                \\ \hline
    \multicolumn{1}{c||}{A}          & \multicolumn{1}{c|}{1104.46 (1.17x)} & \multicolumn{1}{c|}{153.7}     & \multicolumn{1}{c|}{21.25}            & \multicolumn{1}{c|}{325.66 (1.05x)} & \multicolumn{1}{c|}{529.26}    & 67.91                \\ \hline
    \multicolumn{1}{c||}{B}          & \multicolumn{1}{c|}{928.85 (0.98x)}  & \multicolumn{1}{c|}{184.15}    & \multicolumn{1}{c|}{25.06}            & \multicolumn{1}{c|}{305.81 (0.99x)} & \multicolumn{1}{c|}{557.93}    & 71.42                \\ \hline
    \multicolumn{1}{c||}{C}          & \multicolumn{1}{c|}{1043.07 (1.10x)} & \multicolumn{1}{c|}{177.15}    & \multicolumn{1}{c|}{24.19}            & \multicolumn{1}{c|}{332.42 (1.08x)} & \multicolumn{1}{c|}{549.1}     & 70.37                \\ \hline
    \multicolumn{1}{c||}{AB}         & \multicolumn{1}{c|}{1096.92 (1.16x)} & \multicolumn{1}{c|}{153.7}     & \multicolumn{1}{c|}{21.25}            & \multicolumn{1}{c|}{324.25 (1.05x)} & \multicolumn{1}{c|}{529.26}    & 67.91                \\ \hline
    \multicolumn{1}{c||}{AC}         & \multicolumn{1}{c|}{1156.4 (1.22x)}  & \multicolumn{1}{c|}{153.66}    & \multicolumn{1}{c|}{21.23}            & \multicolumn{1}{c|}{340.15 (1.10x)} & \multicolumn{1}{c|}{528.33}    & 67.74                \\ \hline
    \multicolumn{1}{c||}{BC}         & \multicolumn{1}{c|}{1296.92 (1.37x)} & \multicolumn{1}{c|}{142.83}    & \multicolumn{1}{c|}{19.92}            & \multicolumn{1}{c|}{588.93 (1.91x)} & \multicolumn{1}{c|}{316.78}    & 41.42                \\ \hline
    \multicolumn{1}{c||}{ABC}        & \multicolumn{1}{c|}{1523.41 (1.61x)} & \multicolumn{1}{c|}{118.89}    & \multicolumn{1}{c|}{16.99}            & \multicolumn{1}{c|}{625.23 (2.02x)} & \multicolumn{1}{c|}{291.78}    & 38.28                \\
    \multicolumn{1}{l}{}            & \multicolumn{1}{l}{}                 & \multicolumn{1}{l}{}           & \multicolumn{1}{l}{}                  & \multicolumn{1}{l}{}                & \multicolumn{1}{l}{}           & \multicolumn{1}{l}{} \\
    \multicolumn{1}{c||}{Dataset}    & \multicolumn{3}{c|}{Recall@100=85\%}                                                                            & \multicolumn{3}{c}{Recall@100=97\%}                                                           \\ \hline
    \multicolumn{1}{c||}{turing100M} & \multicolumn{3}{c|}{DiskANN++}                                                                                & \multicolumn{3}{c}{DiskANN++}                                                               \\ \hline \hline
    \multicolumn{1}{c||}{components} & \multicolumn{1}{c|}{QPS}             & \multicolumn{1}{c|}{mean I/Os} & \multicolumn{1}{c|}{mean hops} & \multicolumn{1}{c|}{QPS}            & \multicolumn{1}{c|}{mean I/Os} & mean hops     \\ \hline
    \multicolumn{1}{c||}{-}          & \multicolumn{1}{c|}{866.13 (1.00x)}  & \multicolumn{1}{c|}{202.21}    & \multicolumn{1}{c|}{27.22}            & \multicolumn{1}{c|}{136.12 (1.00x)} & \multicolumn{1}{c|}{1244.34}   & 157.5                \\ \hline
    \multicolumn{1}{c||}{A}          & \multicolumn{1}{c|}{990.49 (1.14x)}  & \multicolumn{1}{c|}{174.84}    & \multicolumn{1}{c|}{23.79}            & \multicolumn{1}{c|}{140.21 (1.03x)} & \multicolumn{1}{c|}{1215.26}   & 153.81               \\ \hline
    \multicolumn{1}{c||}{B}          & \multicolumn{1}{c|}{855.86 (0.99x)}  & \multicolumn{1}{c|}{202.21}    & \multicolumn{1}{c|}{27.22}            & \multicolumn{1}{c|}{136.33 (1.00x)} & \multicolumn{1}{c|}{1244.34}   & 157.5                \\ \hline
    \multicolumn{1}{c||}{C}          & \multicolumn{1}{c|}{988.52 (1.14x)}  & \multicolumn{1}{c|}{189.56}    & \multicolumn{1}{c|}{25.57}            & \multicolumn{1}{c|}{153.48 (1.13x)} & \multicolumn{1}{c|}{1203.58}   & 152.33               \\ \hline
    \multicolumn{1}{c||}{AB}         & \multicolumn{1}{c|}{958.94 (1.11x)}  & \multicolumn{1}{c|}{174.84}    & \multicolumn{1}{c|}{23.79}            & \multicolumn{1}{c|}{139.33 (1.02x)} & \multicolumn{1}{c|}{1215.26}   & 153.81               \\ \hline
    \multicolumn{1}{c||}{AC}         & \multicolumn{1}{c|}{1066.85 (1.23x)} & \multicolumn{1}{c|}{173.12}    & \multicolumn{1}{c|}{23.58}            & \multicolumn{1}{c|}{155.43 (1.14x)} & \multicolumn{1}{c|}{1187.15}   & 150.33               \\ \hline
    \multicolumn{1}{c||}{BC}         & \multicolumn{1}{c|}{1228.52 (1.42x)} & \multicolumn{1}{c|}{152.59}    & \multicolumn{1}{c|}{21.07}            & \multicolumn{1}{c|}{259.17 (1.90x)} & \multicolumn{1}{c|}{714.63}    & 91.37                \\ \hline
    \multicolumn{1}{c||}{ABC}        & \multicolumn{1}{c|}{1363.68 (1.57x)} & \multicolumn{1}{c|}{131.89}    & \multicolumn{1}{c|}{18.54}            & \multicolumn{1}{c|}{265.95 (1.95x)} & \multicolumn{1}{c|}{694.09}    & 88.84               
    \end{tabular}
    \end{adjustbox}
\end{table}

\subsection{Ablation Analysis}
\label{sec:ablation}
\noindent\textbf{Effect of each component.} We employ ablation analysis to assess the contributions of query-sensitive entry vertex (A), isomorphic mapping (B), and pagesearch (C) individually within DiskANN++. The experiments compare DiskANN++ and DiskANN across QPS, mean \# I/Os, and mean routing length (i.e., hops), given recall@100=97\% (high recall) and recall@100=85\% (low recall), across three common datasets: sift100M, deep100M, and turing100M. From Table \ref{tab:ablation_res}, we conclude that: 
(1) The query-sensitive entry vertex has a modest contribution at high recall and a more substantial impact at low recall. And its contribution is independent of other components. For enhanced persuasiveness, we record the hop reduction attributed to the query-sensitive entry vertex for each query in the official query set of deep \cite{Artem2016} (refer to Figure \ref{fig:hop_reduce_count}). The left subfigure illustrates the relationship between the hop reduction and the distance between the query and central vertex (medoid) of graph index, and the right subfigure provides a statistical overview of the query count for different hop reduction. It is evident that queries positioned farther from the medoid experience more substantial benefits. Practically, there is a higher probability of a random high-dimensional point falling at a greater distance compared to falling near a fixed point. Hence, on average, queries exhibit a pronounced reduction in hop count.
(2) Using isomorphic mapping alone does not accelerate beamsearch, because beamsearch doesn't have a mechanism to consider the highly-related vertices in the same page. Remark: despite the potential increase in CPU cache prefetching effects (refer to \cite{Wei2016}), in scenarios related to disks in this context, SRAM prefetch optimization is diluted to insignificance by a larger and slower proportion of I/O operations.
(3) Using pagesearch alone shows a limited improvement. This is because the original SSD layout is established by a round-robin assignment, resulting a low locality of vertices in the same page. So, pagesearch cannot benefit more from original SSD layout.
(4) The combination of isomorphic mapping and pagesearch represents the primary contribution, it brings more improvement on QPS at higher recall. This is because pagesearch can effectively unleash the potential of the refined SSD layout via isomorphic mapping.

\begin{figure}[t]
    \centering
    \includegraphics[width=\linewidth]{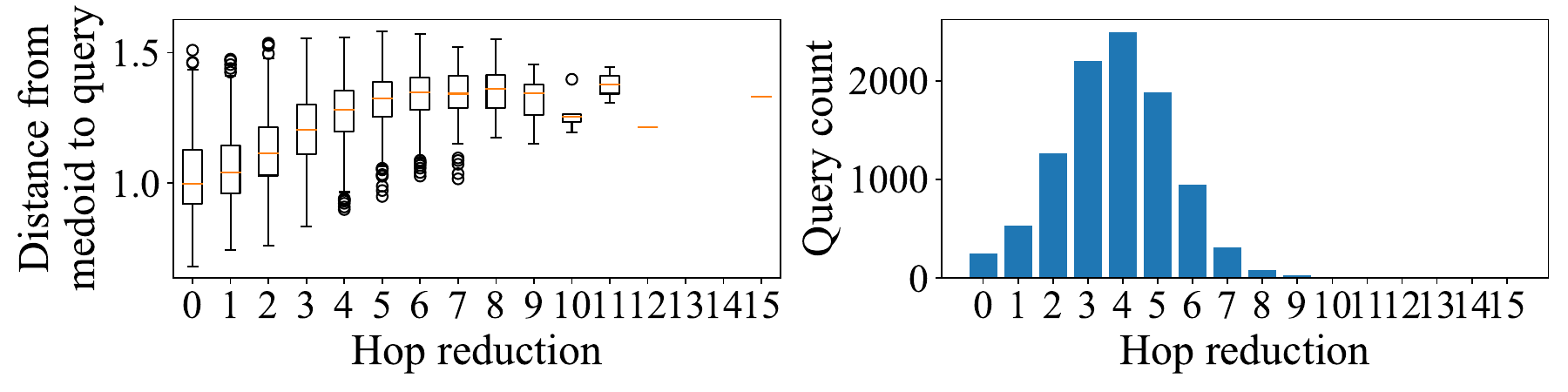}
    \vspace{-0.7cm}
    \caption{Relationship between the hop reduction and the distance between the query and graph medoid (left). Query count for different hop reduction (right).}
    \label{fig:hop_reduce_count}
\end{figure}

\vspace{0.1cm}
\noindent\textbf{Effect of different graph reordering methods.} The key of our isomorphic mapping is the pack-merge method (\S \ref{sec:pack}). We replace it by different graph reordering methods that can also be used to refine the SSD layout, e.g., \textit{randomOrder} and \textit{Gorder} \cite{Wei2016}, and evaluate their overhead (memory usage and runtime) and speedup, using the same pagesearch. Considering the immense time complexity of Gorder, rendering it impractical to complete in a foreseeable time, we adopted the improved parallelGorder method used in OOD-DiskANN \cite{Jaiswal2022}. This method sacrifices some reorder effect to enhance runtime speed. All requirements are conducted with 8 threads. 

From Table \ref{tab:overhead_of_reorder}, we found that our pack-merge method is the most efficient one that consumes modest memory and offers a comparable QPS improvement as parallelGorder+pagesearch. In contrast, since parallelGorder requires $>$ 70GB memory to perform, which largely exceeds the memory constraint (MLE stands for memory limit exceed), its practical deployability is significantly limited. Therefore, due to the lightweight nature of the pack-merge-based method, we can easily optimize the existing Vamana index on any device where DiskANN is deployed, whether it is the machine used for index construction or the one used for searching. This optimization can lead to a substantial improvement in search performance.

\section{Related Work}
We review the relevant technologies, including ANNS, graph-based ANNS, and DiskANN and its variants.

\noindent\textbf{Approximate nearest neighbor search (ANNS).} The researches of ANNS are generally categorized as three types: quantization-based \cite{Jegou2011,Ge2013,Kalantidis2014,Wang2017,Pan2020}, hash-based \cite{Gionis1999,Weiss2008,Xu2011,Huang2017,Karaman2019,Gong2020}, and graph-based \cite{Hajebi2011,Malkov2014,Fu2016,Malkov2016,Fu2017} ANNS. Among them, graph-based methods achieve a favourable tradeoff between accuracy and efficiency \cite{Hacid2010,Aoyama2013,Aumueller2020,Wang2021}. However, existing research has primarily focused on optimizing search accuracy and efficiency but ignore the significant memory overhead for maintaining the large graph index. This motivates the following disk-resident graph-based ANNS.

\vspace{0.1cm}
\noindent\textbf{Disk-resident graph-based ANNS.} Most of the solutions are based on hybrid indexing schemes that utilize both disk and memory. GRIP \cite{Zhang2019} reduces the memory consumption of a graph index by compressing high-dimensional vectors to lower-dimensional ones. HM-ANN \cite{Ren2020} addresses memory challenges via heterogeneous storage solutions. SPANN \cite{Chen2021} is a partition tree-based method that maintains a small SPT in memory but the full vectors in disk. BBANN \cite{Simhadri2022} combines a memory resident invert index and on-disk graph index to optimize search tasks. The search accuracy of above methods is affected by the compressed vectors, low-quality invert index, and SPT. So, researchers resort to DiskANN family of methods to reduce memory overhead while ensuring a high accuracy.

\vspace{0.1cm}
\noindent\textbf{DiskANN family of methods.} DiskANN \cite{JayaramSubramanya2019} introduces PQ to assist graph-based ANNS and memory usage. It has been widely deployed in the industry such as Bing search of Microsoft \cite{Zhang2022a} and many follow-up works present variants of DiskANN, e.g., Filter-DiskANN for filtered search \cite{gollapudi2023}, OOD-DiskANN for cross-modal search \cite{Jaiswal2022}, and Fresh-DiskANN for streaming search \cite{Singh2021}. Although DiskANN and its variants achieve a good performance on both memory overhead and search accuracy, they all ignore the fact that frequent SSD I/O would affect the overall QPS. This motivates us to address the I/O issue of DiskANN while retaining its strengths.


\vspace{-0.1cm}
\section{Conclusion}
This paper proposes DiskANN++ that consists of three optimizations for DiskANN. (1) It expedites the convergence of graph search to the query neighborhood by employing query-sensitive entry vertex. (2) It applies isomorphic mapping on Vamana to refine its SSD layout and enhance each SSD page's data value. (3) It replaces original beamsearch by a new pagesearch based on the refined SSD layout with an asynchronous page expansion. Experimental results shows that DiskANN++ achieves a notable 1.5X to 2.2X improvement on QPS compared to DiskANN across various hardware configurations and datasets, given the same accuracy constraint.

\section*{Acknowledgment}
This work was supported by the National NSF of China (62072149), the Primary R\&D Plan of Zhejiang (2023C03198 and 2021C03156), and Fundamental Research Funds for the Provincial Universities of Zhejiang (GK219909299001-006). 


\bibliographystyle{IEEEtran}
\bibliography{references}

\end{document}